\newcommand{\A}{\ensuremath{\mathcal{A}}}
\newcommand{\R}{{\mathbb R}}
\newcommand{\matA}{\mathbf{A}}
\newcommand{\E}[1]{\mathbb{E}\left[ #1\right]}
\newcommand{\onenorm}[1]{\left\|#1\right\|_1}
\newcommand{\zeronorm}[1]{\left\|#1\right\|_0}
\newcommand{\twonorm}[1]{\left\|#1\right\|_2}
\newcommand{\inftynorm}[1]{\left\|#1\right\|_\infty}
\newcommand{\supp}{\mathrm{supp}}
\newcommand{\argmin}{\mathrm{argmin}}
\newcommand{\sgn}{\mathrm{sign}}
\newcommand{\ip}[2]{\ensuremath{\left\langle #1,#2\right\rangle}}
\newcommand{\inset}[1]{\left\{#1\right\}}
\newcommand{\inparen}[1]{\left(#1\right)}
\newtheorem{theorem}{Theorem} 
\newtheorem{definition}[theorem]{Definition}
\newtheorem{cor}[theorem]{Corollary} 
\newtheorem{remark}{Remark}
\newcommand{\defby}{\overset{\mathrm{\scriptscriptstyle{def}}}{=}}
\newcommand{\sign}{\mathrm{sign}}
\newcommand{\vect}[1]{\bm{#1}}
\newcommand{\mat}[1]{\bm{#1}}
\def \< {\langle}
\def \> {\rangle}
\def \x {\vect{x}}
\def \u {\vect{u}}
\def \w {\vect{w}}
\def \y {\vect{y}}
\def \v {\vect{v}}
\def \A {\mat{A}}
\def \z {\vect{z}}
\def \a {\vect{a}}
\def \t {\vect{t}}
\def \e {\vect{e}}
\begin{document}

\title{\bf Exponential decay of reconstruction error\\ from binary measurements of sparse signals}

\author{Richard Baraniuk$^r$, Simon Foucart$^g$, Deanna Needell$^c$, Yaniv Plan$^b$, Mary Wootters$^m$%
\thanks{Authors are listed in alphabetical order.}
\thanks{This research was performed as part of the AIM SQuaRE program.  
In addition,
Baraniuk is partially supported by NSF grant number CCF-0926127 and ARO MURI grant number W911NF-09-1-0383,
Foucart by NSF grant number DMS--1120622,
Needell by Simons Foundation Collaboration grant number 274305, Alfred P. Sloan Fellowship and NSF Career grant number 1348721,
Plan by an NSF Postdoctoral Research Fellowship grant number 1103909, 
Wootters by a Rackham Predoctoral Fellowship and by the Simons Institute for the Theory of Computing.}
\\[3mm]
\small $^r$Department of Electrical and Computer Engineering,
Rice University,  \\[-1mm]
\small 6100 Main Street, Houston, TX 77005 USA. Email: richb@rice.edu\\[1mm]
\small $^g$Department of Mathematics, University of Georgia \\[-1mm]
\small 321C Boyd Building, Athens, GA 30602 USA. Email: foucart@math.uga.edu\\[1mm]
\small $^c$Department of Mathematical Sciences, Claremont McKenna College, \\[-1mm]
\small 850 Columbia Ave., Claremont, CA 91711, USA. Email: dneedell@cmc.edu \\[1mm]
\small $^b$Department of Mathematics, University of British Columbia \\[-1mm]
\small 1984 Mathematics Road, Vancouver, B.C. Canada V6T 1Z2. Email: yaniv@math.ubc.ca\\[1mm]
\small $^m$Department of Mathematics, University of Michigan \\[-1mm]
\small  530 Church Street, Ann Arbor, MI 48109, USA. Email: wootters@umich.edu
}

\maketitle

\begin{abstract}
Binary measurements arise naturally in a variety of statistical and engineering applications.  
They may be inherent to the problem---e.g., in determining the relationship between genetics and the presence or absence of a disease---or they may be a result of extreme quantization.  
A recent influx of literature has suggested that using prior signal information can greatly improve the ability to reconstruct a signal from binary measurements.  
This is exemplified by \textit{one-bit compressed sensing}, which takes the compressed sensing model but assumes that only the sign of each measurement is retained.   
It has recently been shown that the number of one-bit measurements required for signal estimation mirrors that of unquantized compressed sensing.  Indeed, $s$-sparse signals in $\R^n$ can be estimated (up to normalization) from $\Omega(s \log (n/s))$ one-bit measurements.  
Nevertheless, controlling the precise accuracy of the error estimate remains an open challenge.  
In this paper, we focus on optimizing the decay of the error as a function of the oversampling factor $\lambda := m/(s \log(n/s))$,
where $m$ is the number of measurements.  
It is known that the error in reconstructing sparse signals from standard one-bit measurements is bounded below by $\Omega(\lambda^{-1})$.   
Without adjusting the measurement procedure, reducing this polynomial error decay rate is impossible.  
However, we show that an adaptive choice of the thresholds used for quantization  may lower the error rate to $e^{-\Omega(\lambda)}$.  
This improves upon guarantees for other methods of adaptive thresholding as proposed in Sigma-Delta quantization.  
We develop a general recursive strategy to achieve this exponential decay and two specific polynomial-time algorithms which fall into this framework, one based on convex programming and one on hard thresholding.  
This work is inspired by the one-bit compressed sensing model, in which the engineer controls the measurement procedure.  
Nevertheless, the principle is extendable to signal reconstruction problems in a variety of binary statistical models as well as statistical estimation problems like logistic regression.     
    
\end{abstract}

{\bf Keywords.} compressed sensing, quantization, one-bit compressed sensing, convex optimization, iterative thresholding, binary regression

\section{Introduction}
\label{sec:intro}

Many practical acquisition devices in signal processing and algorithms in machine learning use a small number of linear measurements to represent a high-dimensional signal. 
Compressed sensing is a technology which takes advantage of the fact that, for some interesting classes of signals, one can use far fewer measurements than dictated by traditional Nyquist sampling paradigm.  
In this setting, one obtains $m$ linear measurements of a signal $\x \in \R^n$ of the form
$$
y_i = \ip{\a_i}{\x},
\qquad i=1,\ldots,m.
$$
Written concisely, one obtains the measurement vector $\y = \matA\x$, where $\matA \in \R^{m \times n}$ is the matrix with rows $\a_1,\ldots,\a_m$.  From these (or even from corrupted measurements $\y = \matA\x + \e$), one wishes to recover the signal $\x$.  To make this problem well-posed, one must exploit a priori information on the signal $\x$, for example that it is \em $s$-sparse\em, i.e.,
$$
\|\x\|_0 \defby |\supp(\x)| = s \ll n,
$$
or is well-approximated by an $s$-sparse signal.  
After a great deal of research activity in the past decade
(see the website \cite{DSPweb} or the references in the monographs \cite{eldar2012compressed,foucart2013}),
it is now well known that when $\matA$ consists of, say, independent standard normal entries, 
one can, with high probability, recover all $s$-sparse vectors ${\x}$ from the $m \approx s\log(n/s)$ linear measurements $y_i = \ip{\a_i}{\x}$, $i=1,\ldots,m$.

However, in practice, the compressive measurements $\ip{\a_i}{\x}$ must be quantized:
one actually observes $\y = Q(\matA \x)$, where the map $Q: \R^m \to \mathcal{A}^m$ is a quantizer that acts entrywise by mapping each real-valued measurement to a discrete quantization alphabet $\mathcal{A}$.  
This type of quantization with an alphabet $\mathcal{A}$ consisting of only two elements was introduced in the compressed sensing setting by \cite{Boufounos2008} and dubbed \em one-bit compressed sensing \em.  
In this work, we focus on this one-bit approach and seek quantization schemes $Q$ and reconstruction algorithms $\Delta$ so that $\hat{\x} = \Delta( Q( \matA \x) )$ is a good approximation to $\x$.
In particular, we are interested in the trade-off between the error of the approximation and the 
  \em oversampling factor \em
\[ \lambda \defby \frac{m}{s\log(n/s)}.\]

\subsection{Motivation and previous work}
  
\sloppy
The most natural quantization method is \em Memoryless Scalar Quantization \em (MSQ),
where each entry of $\y = \matA \x$ is rounded to the nearest element of some quantization alphabet $\mathcal{A}$.  
If $\mathcal{A}= \delta \mathbb{Z}$ for some suitably small $\delta > 0$, then this rounding error can be modeled as an additive measurement error~\cite{Dai2009}, and the recovery algorithm can be fine-tuned to this particular situation~\cite{jacques2011dequantizing}.  
In the one-bit case, however, the quantization alphabet is $\mathcal{A} = \inset{\pm 1}$ and the quantized measurements take the form $\y = \sgn (\A \x)$,
meaning that $\sgn$\footnote{We define $\sgn(0)=1$.}
acts entrywise as
\[y_i = Q_{\rm MSQ}(\ip{\a_i}{\x}) = \sgn( \ip{\a_i}{\x}),
\qquad i=1,\ldots,m. \]
One-bit compressed sensing was introduced in~\cite{Boufounos2008}, 
and it has generated a considerable amount of work since then,
see~\cite{DSPweb} for a growing list of literature in this area.  
Several efficient recovery algorithms have been proposed,
based on linear programming~\cite{pv-1-bit,pv-noisy-1bit, gnjn2013}
and on modifications of iterative hard thresholding~\cite{Jacques2011,jacques2013quantized}.
As shown in \cite{Jacques2011},  with high probability 
one can perform the reconstruction from one-bit measurements with error
\[ \twonorm{ \x - \hat{\x} } \lesssim \frac{1}{\lambda} 
\qquad \text{ for all } \x \in \Sigma_s' := \{ \v \in \R^n \ :\ \|\v\|_0 \leq s, \|\v\|_2 = 1\}. \]
In other words, a  uniform $\ell_2$-reconstruction error of at most $\gamma>0$ can be achieved with 
$m \asymp \gamma^{-1} s\log(n/s)$ one-bit measurements.

\sloppy
Despite the dimension reduction from $n$ to $s\log(n/s)$, MSQ presents substantial limitations~\cite{Jacques2011, goyal1998quantized}.  
 Precisely, according to~\cite{goyal1998quantized}, even if the support of $\x \in \Sigma_s'$ is known, 
  the best recovery algorithm $\Delta_{\rm opt}$ must obey
\begin{equation}\label{eq:lowerbound}
 \twonorm{ \x - \Delta_{\rm opt}( Q_{\rm MSQ}(\matA \x) ) } \gtrsim \frac{1}{\lambda}
\end{equation}
up to a logarithmic factor.
An intuition for the limited accuracy of MSQ is given in Figure \ref{fig:tesselate}.

\begin{figure}
\begin{center}
\begin{tikzpicture}[scale=1.5]

\draw[fill=gray!50,gray!50] (-.2,-1) ellipse (1cm and .1cm);
\draw[fill=gray!70,gray!70] (-.1,-1) ellipse (.6cm and .06cm);
\draw[very thick,fill=white] (0,0) circle (1cm);
\begin{scope}
\foreach \x in {-90,130}
{
\clip[rotate=\x] (0,0) ellipse (.87cm and .46cm);
}
\clip (-.4,.3) rectangle (.3,1);
\fill[blue!60!black!30] (0,0) circle (1cm);
\end{scope}
\begin{scope}
\foreach \x in {200}
{
\clip[rotate=\x] (0,0) ellipse (.9cm and .45cm);
}
\clip (-.4,.3) rectangle (.3,1);
\fill[white] (0,0) circle (1cm);
\end{scope}

\foreach \x in {0,80,-90,130,200}
{
\begin{scope}[rotate=\x]
 \pgfsetlinewidth{1pt}
    \pgfmoveto{\pgfpoint{1cm}{0cm}}
    \pgfpatharcto{1.1cm}{.8cm}{0}{0}{0}{\pgfpoint{-1cm}{-0cm}}\pgfusepath{stroke};
\end{scope}
}
\node at (1.2,-.8) {$S^{n-1}$};
\node (x) at (-1.5,.6) {$\x$};
\node[circle, fill=black, draw, scale=.2](xpt) at (-.2,.6){};
\draw[->] (x) to (xpt) {};
\end{tikzpicture}
\end{center}
\caption{\em Geometric interpretation of one-bit compressed sensing.  \em Each quantized measurement reveals which side of a hyperplane (or great circle, when restricted to the sphere) the signal $\x$ lies on.  After several measurements, we know that $\x$ lies in one unique region.  However, if the measurements are non-adaptive, then as the region of interest becomes smaller, it becomes less and less likely that the next measurement yields any new information about $\x$.}
\label{fig:tesselate}
\end{figure}
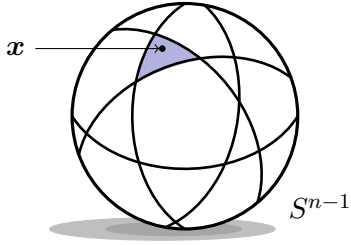

Alternative quantization schemes have been developed to overcome this drawback.  
For a specific signal model and reconstruction algorithm, \cite{Sun2009} obtained the optimal quantization scheme, but more general quantization schemes remain open.

Recently, Sigma-Delta quantization schemes have also been proposed as a more general quantization model~\cite{gunturk2010,krahmer2013sigma}.  
These works show that, with high probability on measurement matrices with independent subgaussian entries, 
$r$-th order Sigma-Delta quantization can be applied to the standard compressed sensing problem to achieve,
for any $\alpha \in (0,1)$, the reconstruction error 
\begin{equation}
\twonorm{ \x - \hat{\x} } \lesssim_r \lambda^{-\alpha(r - 1/2)} 
\label{eq:xx1}
\end{equation}
with a number of measurements
\[ m \approx s \inparen{\log(n/s)}^{1/(1-\alpha)}. \]
For suitable choices of $\alpha$ and $r$, the guarantee (\ref{eq:xx1}) overcomes the limitation~\eqref{eq:lowerbound},
but it is still polynomial in~$\lambda$.
This leads us to ask whether an exponential dependence can be achieved.

\subsection{Our contributions}

In this work, we focus on improving the trade-off between the error $\twonorm{\x - \hat{\x}}$ and the oversampling factor $\lambda$.
To the best of our knowledge, all quantized compressed sensing schemes obtain guarantees of the form
\begin{equation}\label{eq:bestknown}
 \twonorm{\x - \hat{\x}} \lesssim \lambda^{-c} \qquad \text{ for all $\x \in \Sigma_s'$} 
 \end{equation}
with some constant $c>0$.
We develop one-bit quantizers $Q: \R^m \to \inset{\pm 1}$, 
coupled with two efficient recovery algorithms $\Delta: \inset{\pm 1} \to \R^m$
that yield the reconstruction guarantee
\begin{equation}\label{eq:guarantee}
 \twonorm{\x -\Delta(Q(\matA \x))} \leq \exp(-\Omega(\lambda)) \qquad \text{ for all $\x \in \Sigma_s'$}.
\end{equation}
It is not hard to see that the dependence on $\lambda$ in (\ref{eq:guarantee}) is optimal, 
since any method of quantizing measurements that provides the reconstruction guarantee $\twonorm{\x - \hat{\x}} \leq \gamma$ must use at least  $\log_2 \mathcal{N}(\Sigma_s', \gamma) \geq s\log_2(1/\gamma)$ bits, where $\mathcal{N}(\cdot)$ denotes the covering number.

\subsubsection{Adaptive measurement model}

A key element of our approach is that the quantizers are {\em adaptive} to previous measurements of the signal in a manner similar to Sigma-Delta quantization~\cite{gunturk2010}.  
In particular, the measurement matrix $\matA \in \R^{m \times n}$ is assumed to have independent standard normal entries
and the quantized measurements take the form of thresholded signs, i.e., 
\begin{equation}\label{eq:query}
 y_i = \sgn(\ip{ \a_i }{\x} - \tau_i) = \left\{
     \begin{array}{lrr}
       1 & \text{if} & \ip{ \a_i }{\x} \geq \tau_i,\\
       -1 & \text{if} & \ip{ \a_i }{\x} < \tau_i.
     \end{array}
   \right.
\end{equation}
Such measurements are readily implementable in hardware, and they retain the simplicity and storage benefits of the one-bit compressed sensing model.  However, as we will show, this model is much more powerful in the sense that it permits optimal guarantees of the form \eqref{eq:guarantee}, which are impossible with standard MSQ one-bit quantization.
As in the Sigma-Delta quantization approach, we allow the quantizer to be adaptive, 
meaning that the quantization threshold $\tau_i$ of the $i$th entry may depend on the $1$st, $2$nd, $\ldots$, $(i-1)$st quantized measurements.
In the context of \eqref{eq:query}, this means that the thresholds $\tau_i$ will be chosen adaptively, resulting in a feedback loop as depicted in Figure \ref{fig:loop}. 
The thresholds $\tau_i$ can also be interpreted as an additive {\em dither}, which is oft-used in the theory and practice of analog-to-digital conversion.

\begin{figure}
\begin{center}
\begin{tikzpicture}[scale=3]
\draw[->] (0.6,0) -- (1,0);
\draw[->] (2.3,0) -- (2.5,0);
\draw[->] (2.7,0) -- (2.95,0);
\draw[->] (4.2,0) -- (4.6,0);
\draw[dashed, ->] (2.6,-0.4) -- (2.6,-0.1);
\draw[dashed] (4, -0.1) -- (4, -0.5);
\draw[dashed, ->] (4, -0.5) -- ( 2.8, -0.5);
\draw (1.1,-0.2) rectangle (1.7,0.2);
\draw (3.1,-0.2) rectangle (3.7,0.2);
\node at (1.4,0) {$\matA$};
\node at (2.6, 0) {$\bigoplus$};
\node at (3.4, 0) {quantize};
\node at (0.4, 0) {$\x\in\mathbb{R}^n$};
\node at (2, 0) {$\matA\x\in\mathbb{R}^m$};  
\node at (4,0) {$\y\in\mathbb{R}^m$};
\node at (2.6, -0.5) {$\tau\in\mathbb{R}^m$};
\end{tikzpicture}
\end{center}
\caption{Our closed-loop feedback system for binary measurements.}
\label{fig:loop}
\end{figure}
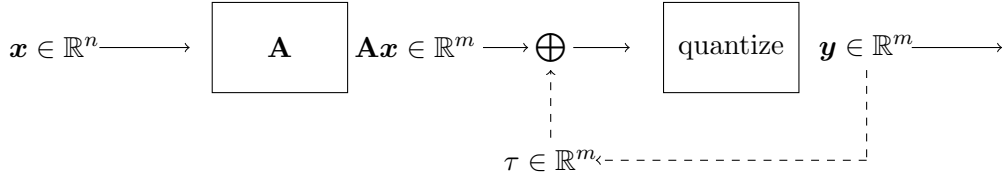

In contrast to Sigma-Delta quantization, the feedback loop involves the calculation of the quantization threshold.  This is the concession made to arrive at exponentially decaying error rates.  
It is an interesting open problem to determine low-memory quantization methods with such error rates that do not require such a calculation.

\subsubsection{Overview of our main result}

Our main result is that there is a recovery algorithm
 using measurements of the form \eqref{eq:query} 
and providing a guarantee of the form \eqref{eq:guarantee}.
For clarity of exposition, we overview a simplified version of our main result below.  The full result is stated in Section \ref{sec:strategy}.  
\begin{theorem}[Main theorem, simplified version]\label{thm:informal}
Let $Q$ and $\Delta$ be the quantization and recovery algorithms given below in Algorithms \ref{alg:simpleQ} and \ref{alg:simpleDelta}, respectively.  Suppose that $\A \in \R^{m \times n}$ and $\tau \in \R^{m}$ have independent standard normal entries.  Then, with probability at least $C\lambda \exp(-cs\log(n/s))$ over the choice of $\A$ and $\tau$, for all $\x \in B_2^n$ with $\|\x\|_0 \leq s$,
\[ \twonorm{ \x - \Delta( Q( \A\x, \A, \tau))  } \leq \exp( -\Omega(\lambda)),  \qquad \text{where} \qquad \lambda = \frac{m}{s\log(n/s)}. \]
\end{theorem}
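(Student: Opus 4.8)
The plan is to build the algorithm around a single-batch \emph{base estimate} and then amplify it to exponential accuracy through an adaptive recursion. First I would establish the base estimate: there is a constant $c<1/2$ and a recovery map $\Delta_0$ such that, from $N = O(s\log(n/s))$ measurements of the form $\sgn(\ip{\a_i}{\v}-\tau_i)$ with $\a_i$ and $\tau_i$ independent standard normal, one recovers \emph{every} $\v \in B_2^n$ with $\zeronorm{\v}\le 2s$ to accuracy $\twonorm{\v - \Delta_0}\le c$, on an event of probability $1-\exp(-\Omega(s\log(n/s)))$. Two features are essential. This must be a \textbf{ball} guarantee, recovering both the direction and the magnitude of $\v$; the Gaussian dither in the thresholds is exactly what supplies the magnitude information that the thresholdless one-bit model lacks. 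It must also be \textbf{uniform} over all admissible $\v$. The two concrete algorithms promised in the introduction---one convex, one based on hard thresholding---each realize such a $\Delta_0$, and I would verify the guarantee for each.

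Second, I would set up the recursion. Partition the $m$ measurements into $T\approx\lambda$ batches of size $N$, maintain an $s$-sparse estimate $\hat{\x}_t$ and a bound $\delta_t$ with $\twonorm{\x-\hat{\x}_t}\le\delta_t$ and $\delta_0=1$ (valid since $\x\in B_2^n$). At stage $t$, choose the thresholds in batch $t$ to be $\tau_i=\ip{\a_i}{\hat{\x}_t}+\delta_t\,\tilde\tau_i$ with fresh Gaussians $\tilde\tau_i$, so that
\[
 y_i = \sgn\!\inparen{\ip{\a_i}{(\x-\hat{\x}_t)/\delta_t} - \tilde\tau_i}.
\]
Because the residual $\x-\hat{\x}_t$ is supported on $\supp(\x)\cup\supp(\hat{\x}_t)$, hence on at most $2s$ coordinates, and $\twonorm{(\x-\hat{\x}_t)/\delta_t}\le 1$, the rescaled residual lies in the $2s$-sparse part of $B_2^n$, so $\Delta_0$ returns $\hat{\w}_t$ with $\twonorm{(\x-\hat{\x}_t)/\delta_t-\hat{\w}_t}\le c$. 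Updating $\hat{\x}_{t+1}$ to the best $s$-sparse approximation of $\hat{\x}_t+\delta_t\hat{\w}_t$ and setting $\delta_{t+1}=2c\,\delta_t$ gives the contraction $\twonorm{\x-\hat{\x}_{t+1}}\le\delta_{t+1}$. Iterating yields $\delta_T\le(2c)^{T}=\exp(-\Omega(T))=\exp(-\Omega(\lambda))$, the claimed bound.

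Third, I would handle the probabilistic bookkeeping. The thresholds in batch $t$ depend on the earlier batches through $\hat{\x}_t$, so the batches are not independent; the fix is precisely the uniformity of $\Delta_0$. Conditioning on the fact that the measurement vectors $\a_i$ of batch $t$ are independent of $\hat{\x}_t$ (a function of the previous batches alone), the single good event for batch $t$ makes the recursion succeed no matter which residual direction the adaptive path produces. A union bound over the $T\approx\lambda$ batches then gives total failure probability at most $T\exp(-\Omega(s\log(n/s)))=O(\lambda\exp(-\Omega(s\log(n/s))))$, matching the statement.

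The main obstacle is concentrated in making the recursion consistent. There are two pieces. First, keeping the effective sparsity bounded: $\hat{\x}_t+\delta_t\hat{\w}_t$ can have support of size up to $3s$, so one must re-threshold to $s$ entries while controlling the induced error; the standard best-$s$-term bound costs a factor of $2$, which is why the contraction constant is $2c$ and we need $c<1/2$ rather than merely $c<1$. Second, the base estimate must hold uniformly and simultaneously for \emph{every} residual the adaptive thresholds can generate, so that the inter-batch dependence does no harm. Proving the uniform ball-recovery estimate for each of the two algorithms---and in particular extracting magnitude, not just direction, from the dithered signs---is where the real work lies; once that lemma is in hand, the geometric decay and the union bound are routine.
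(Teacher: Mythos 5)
Your proposal is correct and follows essentially the same route as the paper: a uniform ``order-one'' ball-recovery guarantee (magnitude and direction, enabled by the Gaussian dither) on batches of size $O(s\log(n/s))$, an adaptive recursion that recenters and rescales the thresholds by the current residual, a factor-of-$2$ loss from re-projecting onto $s$-sparse vectors via $H_s$, and a union bound over the $T\approx\lambda$ batches with uniformity absorbing the inter-batch dependence. The paper's proof of Theorem~\ref{thm:outeralg} is precisely this induction with contraction constant $1/2$ (order-one error $R_t/4$ doubled by the thresholding step), so no further comparison is needed.
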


The quantization algorithm works iteratively.  First, a small batch of measurements are quantized in a memoryless fashion.  From this first batch, one gains a very rough estimate of $\x$ (called $\x_1$).  The next batch of measurements are quantized with a focus on encoding the difference between $\x$ and $\x_1$, and so on.  Thus, the trap depicted in Figure \ref{fig:tesselate} is avoided; each hyperplane is translated with an appropriate dither, with the aim of cutting the size of the feasible region.
The recovery algorithm also works iteratively and its iterates are in fact intertwined with the iterates of the quantization algorithm.
We artificially separate the two algorithms  below.

\begin{algorithm}[t]
\KwIn{ Linear measurements $\matA\x \in \R^m$;
measurement matrix $\A \in \R^{m \times n}$;  
sparsity parameter $s$;
thresholds $\tau \in~\R^m$;
parameter $q \geq Cs\log(n/s)$ for the size of batches.}
\KwOut{Quantized measurements $\y \in \{\pm 1\}^m$. }

$T \gets \left\lfloor \frac{m}{q} \right\rfloor$

Partition $\A$ and $\tau$ into $T$ blocks $\A^{(1)}, \ldots, \A^{(T)} \in \R^{q \times n}$ and $\tau^{(1)},\ldots.\tau^{(T)} \in \R^q$.

$\x_0 \gets {\bf 0}$

\For{$t=1,\ldots,T$}
{
	$\mathbf{\sigma}^{(t)} \gets \matA^{(t)} \x_{t-1}$	
	
	$\y^{(t)} \gets \sign(\A^{(t)} \x - 2^{2-t}\mathbf{\tau}^{(t)} - \mathbf{\sigma}^{(t)})$
	
	$\z_t \gets \argmin \|\z\|_1 \qquad \text{subject to} \quad \|\z\|_2 \leq 2^{2-t}, \ y^{(t)}_i \left(\ip{ \a^{(t)}_i }{\z} - 2^{2-t} \tau^{(t)}_i \right) \geq 0 \quad  \text{for all } i$
	\tcp{$\z_t$ is an approximation of $\x - \x_{t-1}$}
	
	$\x_t \gets H_s( \x_{t-1} + \z_t )$ \tcp{$H_s$ keeps $s$ largest (in magnitude) entries and zeroes out the rest}  
}
\Return{$\y^{(t)} \mbox{ \rm for } t=1,\ldots,T$ }\tcp{Notice that we discard $\mathbf{\sigma}^{(t)}$}
\caption{Adaptive quantization}\label{alg:simpleQ}
\end{algorithm}

\begin{algorithm}[t]
\KwIn{ Quantized measurements $\y \in \{\pm 1\}^m$;  
measurement matrix $\matA$;
sparsity parameter $s$;
thresholds $\mathbf{\tau} \in \R^m$;
size of batches $q$.} 
\KwOut{  Approximation $\hat{\x} \in \R^n$. }

$T \gets \left\lfloor \frac{m}{q} \right\rfloor$

Partition $\A$ and $\tau$ into $T$ blocks $\A^{(1)}, \ldots, \A^{(T)} \in \R^{q \times n}$ and $\tau^{(1)},\ldots.\tau^{(T)} \in \R^q$.

$x_0 \gets {\bf 0}$

\For{ $t=1,\ldots,T$ }
{
	$\z_t \gets \argmin \|\z\|_1 \qquad\text{subject to} \quad\|\z\|_2 \leq 2^{2-t}, \ y^{(t)}_i \left(\ip{ \a^{(t)}_i }{\z} - 2^{2-t}\tau^{(t)}_i \right) \geq 0 \quad \text{for all } i$
	
	$\x_t = H_s( \x_{t-1} + \z_t )$ 
	}
\Return{$\x_T$}
\caption{Recovery}
\label{alg:simpleDelta}
\end{algorithm}

Note that we present Algorithms \ref{alg:simpleQ} and \ref{alg:simpleDelta} at this point mainly because they are the simplest to state.  Below we will provide a more general framework for algorithms that satisfy the guarantees of Theorem \ref{thm:informal} and develop a second set of algorithms with computational advantages.

\subsubsection{Robustness}

Our algorithms are robust to two different kinds of measurement corruption.
First, they allow for perturbed linear measurements of the form $\ip{\a_i}{\x} + e_i$ for an error vector $\e \in \R^m$ with bounded $\ell_\infty$-norm.  Second they allow for post-quantization sign flips, recorded as a vector $\mathbf{f} \in \{\pm 1\}^m$.

Formally, the measurements take the form
\begin{equation}\label{eq:noisyquery}
 y_i = f_i \, \sgn(\ip{ \a_i }{\x} - \tau_i + e_i), \qquad i=1,\ldots,m.
\end{equation}
It is known that for inaccurate measurements with pre-quantization noise on the same order of magnitude as the signal, 
even unquantized compressed sensing algorithms must obey a lower bound of the form \eqref{eq:lowerbound} \cite{candes2013well}.  
Our algorithms respect this reality and exhibit exponentially fast convergence until the estimate hits the ``noise floor''---that is, 
until the error $\twonorm{\x - \hat{\x}}$ is on the order of $\|\e\|_\infty$.

Table \ref{tab:algs} summarizes the various noise models, adaptive threshold calculations, and algorithms we develop and study below. 

\begin{table}[t]
\begin{center}
\caption{
Summary of the noise models, adaptive threshold calculations, and algorithms considered.  See Section~\ref{sec:magnitude recovery} for further discussion of the trade-offs between the two algorithms.}
\label{tab:algs}
\smallskip
\begin{tabular}{|>{\centering\arraybackslash}m{2in}|>{\centering\arraybackslash}m{2in}|>{\centering\arraybackslash}m{2in}|}
\hline
{\bf Noise model} & {\bf Threshold algorithm} & {\bf Recovery algorithm} \\ \hline
Additive error $e_i$ in \eqref{eq:noisyquery} & Algorithm~\ref{alg:Q}, instantiated by Algorithm~\ref{alg:T0socp} & Convex programming: Algorithm~\ref{alg:Delta}, instantiated by Algorithm~\ref{alg:Delta0socp} 
 \\ \hline
Additive error $e_i$ and sign flips $f_i$ in \eqref{eq:noisyquery} & Algorithm~\ref{alg:Q}, instantiated by Algorithm~\ref{alg:T0ht} & Iterative hard thresholding: Algorithm~\ref{alg:Delta}, instantiated by Algorithm~\ref{alg:Delta0ht}\\\hline
\end{tabular}
\end{center}
\end{table}

\subsubsection{Relationship to binary regression}

Our one-bit adaptive quantization and reconstruction algorithms are more broadly applicable to a certain kind of statistical classification problem related to sparse binary regression, and in particular sparse logistic and probit regression.
These techniques are often used to explain statistical data in which the response variable is binary. 
In regression, it is common to assume that the data $\{z_i\}$ is generated according to the \textit{generalized linear model}, where
$z_i \in \{0,1\}$ is a Bernoulli random variable satisfying
\begin{equation}\label{eq:gen lin}
\E {z_i} = f(\langle \a_i, \x \rangle)
\end{equation}
for some function $f : \R \rightarrow [0,1]$.
The generalized linear model is equivalent to the noisy one-bit compressed sensing model when the measurements $y_i = 2 z_i - 1 \in \{\pm 1\}$ and
\[
P(y_i = 1)  =: f(\langle \a_i, \x \rangle), 
\]
or equivalently, when 
\[
y_i = \sign(\langle \a_i, \x \rangle + e_i)
\]
with $f(t) := P(e_i \geq -t)$.
In summary, one-bit compressed sensing is equivalent to binary regression as long as $f$ is the cumulative distribution function (CDF) of the noise variable $e_i$.
The most commonly used CDFs in binary regression are the inverse logistic link function $f(t) = \frac{1}{1+e^{t}}$ in logistic regression and the inverse probit link function $f(t) = \int^t_{-\infty} \mathcal{N}(s) \mathrm{d}s$ in probit regression. 
These cases correspond to the noise variable $e_i$ being logistic and Gaussian distributed, respectively.

The new twist here is that the quantization thresholds are selected adaptively; see Section~\ref{sec:related} for some examples.
Specifically, our adaptive threshold measurement model is equivalent to the adaptive binary regression model
\[
y_i = \sign (\langle \a_i, \x \rangle + e_i - \tau_i )
\]
with
\[
P(y_i = 1) = P(e_i - \tau_i >= -t) = f( t - \tau_i).
\]
The effect of $\tau_i$ in this adaptive binary regression is equivalent to an offset term added to all measurements $y_i$.
Standard binary regression corresponds to the special case with $\tau_i = 0$.

\subsection{Organization}
In Section~\ref{sec:magnitude recovery}, we introduce two methods to recover not only the direction, but also the magnitude, of a signal from one-bit compressed sensing measurements of the form \eqref{eq:noisyquery}.  
These methods may be of independent interest (in one-bit compressed sensing, only the direction can be recovered), but they do not exhibit the exponential decay in the error that we seek.  In Section~\ref{sec:strategy}, we will show how to use these schemes as building blocks to obtain \eqref{eq:guarantee}.
The proofs of all of our results are given in Section \ref{sec:proofs}.
In Section~\ref{sec:exps}, we present some numerical results for the new algorithms.
We conclude in Section \ref{sec:discussion} with a brief summary. 

\subsection{Notation}
Throughout the paper,
we use the standard notation $\|\v\|_2 = \sqrt{\sum_i v_i^2}$ for the $\ell_2$-norm of a vector $\v \in \R^n$,
$\|\v\|_1 = \sum_i |v_i|$ for its $\ell_1$-norm,
and $\|\v\|_0$ for its number of nonzero entries.
A vector $\v$ is called $s$-sparse if $\|\v\|_0 \le s$
and effectively $s$-sparse if $\|\v\|_1 \le \sqrt{s} \|\v\|_2$.
We write $H_s(\v)$ to represent the vector in $\R^n$ agreeing with $\v$ on the index set of largest $s$ entries of $\v$ (in magnitude) and with the zero vector elsewhere.
We use a prime to indicate $\ell_2$-normalization,
so that $H_s'(\v)$ is defined as $H_s'(\v) := H_s(\v)/\twonorm{H_s(\v)}$.
The set
$\Sigma_s := \{ \v \in \R^n: \|\v\|_0 \le s \}$
of $s$-sparse vectors is accompanied by the set 
$\Sigma_s' := \{ \v \in \R^n: \|\v\|_0 \le s, \|\v\|_2 = 1 \}$
of $\ell_2$-normalized $s$-sparse vectors.  
For $R>0$, we write $R \Sigma_s'$ to mean the set $\{ \v \in \R^n: \|\v\|_0 \le s, \|\v\|_2 = R \}$.
We also write $B_2^n = \{ \v \in \R^n: \|\v\|_2 \leq 1\}$ for the $\ell_2$-ball in $\R^n$ and $RB_2^n$ for the appropriately scaled version.
We consider the task of recovering $\x \in \Sigma_s$ from measurements of the form \eqref{eq:query} or \eqref{eq:noisyquery} for $i=1, \ldots, m$.
These measurements are organized as a matrix $\matA \in \R^{m \times n}$ with rows $\a_1,\ldots,\a_m$ and a vector $ \tau \in \R^m$ of thresholds.
Matching the Sigma-Delta quantization model, 
the $\a_i \in \R^n$ may be random but are non-adaptive, 
while the $\tau_i \in \R$ may be chosen adaptively, in either a random or deterministic fashion.  
The Hamming distance between sign vectors $\y,\tilde{\y} \in \{\pm 1\}^m$ is defined as $d_H(\y,\tilde{\y}) = \sum_i {\bf 1}_{ \{ y_i \not= \tilde{y}_i \} }$.

\section{Magnitude recovery}\label{sec:magnitude recovery}

Given an $s$-sparse vector $\x \in \R^n$, several convex programs are provably able to extract an accurate estimate of the direction of $\x$ from $\sign(\matA \x)$ or $\sign(\matA \x + \e)$ \cite{pv-noisy-1bit, pv-1-bit}.  
However, recovery of the magnitude of $\x$ is challenging in this setting \cite{knudson2014one}.  Indeed, all magnitude information about $\x$ is lost in measurements of the form $\sign(\matA \x)$.  Fortunately, if random (non-adaptive) dither is added before quantization, then magnitude recovery becomes possible, i.e., noise can actually help with signal reconstruction.  
This observation has also been made in the concurrently written paper \cite{knudson2014one} and also in the literature on binary regression in statistics \cite{1bit-MC}.  

Our main result will show that both the magnitude and direction of $\x$ can be estimated with exponentially small error bounds.    
In this section, we first lay the groundwork for our main result by developing two methods for one-bit signal acquisition and reconstruction that provide accurate reconstruction of both the magnitude and direction of $\x$ with polynomially decaying error bounds.

We propose two different order-one recovery schemes.  
The first is based on second-order cone programming and is simpler but more computationally intensive.  The second is based on hard thresholding, is faster, and is able to handle a more general noise model (in particular, random sign flips of the measurements) but requires an adaptive dither. Recall Table \ref{tab:algs}.

\subsection{Second-order cone programming}\label{sec:linprog}

The size of the appropriate dither/threshold depends on the magnitude of $\x$.  Thus, let $R > 0$ satisfy $\twonorm{\x} \leq R$.  We take measurements of the form
\begin{equation}
\label{eq:socp meas}
y_i = \sign(\< \a_i, \x \> - \tau_i + e_i ), \qquad i = 1,  \hdots, q,
\end{equation}
where $\tau_1,\ldots,\tau_q \sim N(0, 4 R^2)$ are known independent normally distributed dithers that are also independent of the rows $\a_1,\ldots,\a_q$ of the matrix $\matA$ and $e_1,\ldots,e_q$ are small deterministic errors (possibly adversarial) satisfying $|e_i| \leq c R$ for an absolute constant $c$. 
The following second-order cone program
\begin{equation}
\label{eq:socp}
{\rm argmin \,} \|\z\|_1 \qquad \mbox{subject to} \quad \twonorm{\z} \leq 2R, \quad y_i (\ip{\a_i}{\z} - \tau_i) \ge 0 \quad \mbox{for all } i=1,\hdots, q
\end{equation}
provides a good estimate of $\x$, as formally stated below.

\begin{theorem}\label{thm:linpro}
Let $1 \geq \delta >0$, 
let $\A \in \R^{q \times n}$ have independent standard normal entries,
and let $\tau_1,\ldots, \tau_q \in \R$ be independent normal variables with variance $4 R^2$. 
Suppose that $n \geq 2q$ and
\[q \geq C' \delta^{-4} s \log(n/s).\]
Then, with probability at least $1 - 3 \exp(-c_0 \delta^4 q)$ over the choice of $\A$ and the dithers $\tau_1,\ldots,\tau_q$, 
the following holds for all 
$\x \in R B_2^n \cap \Sigma_s$
and $\e \in \R^q$ satisfying $\inftynorm{\e} \leq c \delta^3 R$:
for $\y$ obeying the measurement model \eqref{eq:socp meas}, the solution $\hat{\x}$ to \eqref{eq:socp} satisfies
$$
\twonorm{\x - \hat{\x}} \le \delta R.
$$
The positive constants $C'$, $c$ and $c_0$ above are absolute constants.
\end{theorem}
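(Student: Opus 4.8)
The plan is to recast the dithered, noisy one-bit measurement as an ordinary centered one-bit measurement in one higher dimension, and then to reduce the recovery guarantee to a single uniform random-hyperplane tessellation estimate, with the pre-quantization noise $\e$ handled by a separate anticoncentration bound.

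First I would normalize: rescaling $\x\mapsto\x/R$, $\tau_i\mapsto\tau_i/R$, $e_i\mapsto e_i/R$ lets me assume $R=1$, so $\x\in B_2^n\cap\Sigma_s$, $\tau_i\sim N(0,4)$, $\inftynorm{\e}\le c\delta^3$, and the goal is $\twonorm{\x-\hat\x}\le\delta$. The key device is to absorb the dither into an extra coordinate. Set $\bar\a_i:=(\a_i,-\tau_i/2)\in\R^{n+1}$; since $\tau_i/2\sim N(0,1)$, the vector $\bar\a_i$ has i.i.d.\ standard normal entries. Writing the lift of $\z\in\R^n$ as $\bar\z:=(\z,2)$, we get $\ip{\bar\a_i}{\bar\z}=\ip{\a_i}{\z}-\tau_i$, so the measurement model becomes the centered one-bit model $y_i=\sign(\ip{\bar\a_i}{\bar\x}+e_i)$ and the feasibility constraint $y_i(\ip{\a_i}{\z}-\tau_i)\ge 0$ becomes $y_i\ip{\bar\a_i}{\bar\z}\ge 0$. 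Because every lift has last coordinate fixed to $2$, all feasible lifts satisfy $\twonorm{\bar\z}\in[2,2\sqrt2]$ and $\twonorm{\bar\x}\in[2,\sqrt5]$, i.e.\ they live in a shell of radius $\Theta(1)$; moreover $\onenorm{\bar\z}=\onenorm{\z}+2$ while $\twonorm{\bar\z}\ge 2$, so any lift with $\onenorm{\z}\lesssim\sqrt s$ is effectively $O(s)$-sparse. The affine hyperplanes in $\R^n$ thus become linear hyperplanes through the origin in $\R^{n+1}$, the probability that $\bar\a_i$ separates two lifts equals $\theta/\pi$ with $\theta$ their angle, and — crucially, because the fixed last coordinate rules out the pure-scaling degeneracy — a small angle forces a small Euclidean distance, $\twonorm{\u-\v}=\twonorm{\bar\u-\bar\v}\lesssim\theta(\bar\u,\bar\v)$.

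The heart of the argument is a uniform tessellation estimate in $\R^{n+1}$: with probability at least $1-e^{-c\delta^4 q}$, provided $q\gtrsim\delta^{-4}s\log(n/s)$, the empirical fraction of the $\bar\a_i$ separating $\bar\u$ and $\bar\v$ lies within $\delta$ of $\theta(\bar\u,\bar\v)/\pi$, simultaneously over all effectively $O(s)$-sparse $\u,\v\in 2B_2^n$. I would prove this by bounding the Gaussian mean width of the relevant constraint set by $\lesssim\sqrt{s\log(n/s)}$ and applying concentration of the separation functional over a net, in the style of the Plan--Vershynin hyperplane-tessellation theory; the exponents $\delta^{-4}$ and $\delta^4 q$ are dictated exactly by this step. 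I expect this uniform (rather than pointwise) control, with the stated error $\delta$ and failure probability, to be the main obstacle. Alongside it I would record the noise bound: an index $i$ can carry a sign $y_i$ inconsistent with the clean sign $\sign(\ip{\bar\a_i}{\bar\x})$ only when $|\ip{\bar\a_i}{\bar\x}|\le|e_i|\le c\delta^3$, and since $\ip{\bar\a_i}{\bar\x}\sim N(0,\twonorm{\bar\x}^2)$ with $\twonorm{\bar\x}\ge 2$, Gaussian anticoncentration gives $\PR{|\ip{\bar\a_i}{\bar\x}|\le c\delta^3}\lesssim\delta^3$; a uniform-over-$\x$ empirical-process bound of the same flavor then yields $|B|\le c'\delta^3 q$ for the ``flipped'' index set $B$, with probability $1-e^{-c\delta^3 q}$.

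Finally I would assemble the pieces. I would first verify that the feasible set $K$ is nonempty and that the minimizer is effectively sparse: since $\x$ violates only the constraints indexed by $B$, and each only by a margin at most $c\delta^3$, a point within $O(\delta)$ of $\x$ is feasible, which both certifies $K\neq\emptyset$ and gives $\onenorm{\hat\x}\le\onenorm{\x}+O(\delta)\lesssim\sqrt s$, so $\bar{\hat\x}$ is effectively $O(s)$-sparse. Now $\hat\x$ is consistent with every observed sign and $\x$ with every clean sign, so the two lifts are separated by at most $|B|\le c'\delta^3 q$ of the hyperplanes; applying the uniform tessellation estimate to the pair $(\bar{\hat\x},\bar\x)$ gives $\theta(\bar{\hat\x},\bar\x)/\pi\le c'\delta^3+\delta$, whence $\twonorm{\x-\hat\x}\lesssim\theta\lesssim c'\delta^3+\delta$, which is at most $\delta$ once the absolute constants $c,c'$ are fixed. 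A union bound over the tessellation event, the flip-count event, and the feasibility event produces total failure probability $3e^{-c_0\delta^4 q}$, matching the statement.
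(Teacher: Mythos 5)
Your noiseless argument is essentially the paper's: both lift the dithered measurements to linear one-bit measurements in $\R^{n+1}$ and invoke a random hyperplane tessellation theorem for effectively sparse vectors on the sphere. The genuine gap is in your treatment of the pre-quantization noise $\e$. Your certificate that the program is feasible and that its minimizer is effectively sparse does not hold up: the bound $\onenorm{\hat{\x}}\le\onenorm{\x}+O(\delta)$ requires exhibiting a \emph{feasible} point whose $\ell_1$-norm exceeds $\onenorm{\x}$ by at most $O(\sqrt{s})$, and ``a point within $O(\delta)$ of $\x$'' does not supply this --- an $\ell_2$-perturbation of size $\delta$ can increase the $\ell_1$-norm by $\delta\sqrt{n}$, which destroys the effective sparsity of $\hat{\x}$ that your tessellation step needs when applied to the pair $(\hat{\x},\x)$. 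Moreover, repairing the violated constraints is not a matter of each being violated by a small margin; to satisfy all of them simultaneously one must actually solve $\matA\u=\e$ with $\u$ small. The paper does exactly this via the simultaneous $(\ell_2,\ell_1)$-quotient property \eqref{SimQP}, producing $\u$ with $\twonorm{\u}\lesssim\twonorm{\e}/\sqrt{q}$ \emph{and} $\onenorm{\u}\lesssim\sqrt{s_*}\,\twonorm{\e}/\sqrt{q}$, so that $\tilde{\x}=\x+\u$ is exactly sign-consistent, feasible, and still effectively sparse. This is the ingredient your argument is missing, and a minimum-$\ell_2$-norm solution of $\matA\u=\e$ is not an adequate substitute precisely because it gives no useful $\ell_1$ control.

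A secondary concern: even granting feasibility and effective sparsity, your route needs a quantitative ``stability'' form of the tessellation theorem (Hamming distance at most $c'\delta^3 q$ implies Euclidean distance at most $\delta$, uniformly over the effectively sparse ball), whereas the paper only needs the weaker ``same sign pattern implies distance at most $\delta/8$'' statement, which is what is available at the sampling rate $q\gtrsim\delta^{-4}s\log(n/s)$. The two-sided uniform tessellation estimate you describe is established in the cited Plan--Vershynin work only at a worse rate (of order $\delta^{-6}$), so your assertion that the $\delta^{-4}$ exponent ``is dictated exactly by this step'' is not justified. The quotient-property reduction sidesteps both problems at once: after replacing $\x$ by $\tilde{\x}$ there are no flipped signs to count and no stability estimate to prove.
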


\begin{remark}
The choice of the constraint $\twonorm{\z} \le 2R$
and the variance $4R^2$ for the $\tau_i$'s
allows for the above theoretical guarantees in the presence of pre-quantization error $\e \not= {\bf 0}$.
However, in the ideal case $\e = {\bf 0}$,
the guarantees also hold if we impose $\twonorm{\z} \le R$ and take a variance of $R^2$.
This more natural choice seems to give better results in practice,
even in the presence of pre-quantization error
(as $R$ was already an overestimation for $\twonorm{\x}$).
This is the route followed in the numerical experiments of Section \ref{sec:exps}.
It only requires changing $2^{2-t}$ to $2^{1-t}$ in Algorithms \ref{alg:simpleQ} and \ref{alg:simpleDelta}.
\end{remark}

To fit into our general framework for exponential error decay, it is helpful to think of the program \eqref{eq:socp} as two separate algorithms: an algorithm $T_0$ that produces thresholds and an algorithm $\Delta_0$ that performs the recovery.  
These are formally described in Algorithms \ref{alg:T0socp} and \ref{alg:Delta0socp}.

\begin{algorithm}[t]
\KwIn{Bound $R$ on $\twonorm{\x}$}
\KwOut{ Thresholds $\tau \in \R^q$ }
\Return{ $\tau \sim N(0,R^2 I_q)$ }
\caption{$T_0$: Threshold production for second-order cone programming}
\label{alg:T0socp}
\end{algorithm}

\begin{algorithm}[t]
\KwIn{ 
Quantized measurements $\y \in \{\pm 1\}^q$;
measurement matrix $\A \in \R^{q \times n}$;
bound $R$ on $\twonorm{\x}$;
thresholds $\tau \in \R^q$.}
\KwOut{ Approximation $\hat{\x}$ }
\Return{ ${\rm argmin \,} \|\z\|_1 \qquad \mbox{\rm subject to} \quad \twonorm{\z} \leq 2R, \quad y_i (\ip{\a_i}{\z} - \tau_i) \ge 0 \quad \mbox{\rm for all } i=1,\hdots, q.$}
\caption{$\Delta_0$: Recovery procedure for second-order cone programming}
\label{alg:Delta0socp}
\end{algorithm}

\subsection{Hard thresholding}\label{sec:hardthresholding}

The convex programming approach is attractive in many respects; in particular, the thresholds/dithers $\tau_i$ are non-adaptive, which makes them especially easy to apply in hardware.  However, the recovery algorithm $\Delta_0$ in Algorithm \ref{alg:Delta0socp} can be costly.  Further, while the convex programming approach can handle additive pre-quantization error, it cannot necessarily handle post-quantization error (sign flips).  In this section, we present an alternative scheme for estimating magnitude, based on iterative hard thresholding that addresses these challenges.
The only downside is that the thresholds/dithers $\tau_i$ become {\em adaptive} within the order-one recovery scheme.

Given an $s$-sparse vector $\x \in \R^n$,
one can easily extract from $\sgn(\matA \x)$ a good estimate for the direction of $\x$.
For example, we will see that $H'_s(\matA^* \sgn(\matA\x))$ is a good approximation of $\x / \|\x\|_2$.
However, as mentioned earlier, there is no hope of recovering the magnitude $\|\x\|_2$ of the signal from $\sgn(\matA\x)$. 
To get around this, we use a second estimator, this time for the direction of $\x-\z$ for a well-chosen vector $\z \in \R^n$ obtained by computing $H_s'(\matA^* \sgn(\matA(\x - \z)))$. 
This allows us to estimate both the direction and the magnitude of $\x$.

As above, we break the measurement/recovery process into two separate algorithms.  
The first is an algorithm $T_0$ describing how to generate the thresholds $\tau_i$.  The second is a recovery algorithm $\Delta_0$ that describes how to recover an approximation $\hat{\x}$ to $\x$ based on measurements of the form \eqref{eq:noisyquery}, using the $\tau_i$ as thresholds.  
These are formally described in Algorithms \ref{alg:T0ht} and \ref{alg:Delta0ht}.
In the algorithm statements, $V$ denotes any fixed rule associating to a vector $\u$ an $\ell_2$-normalized vector $V(\u)$ that is both orthogonal to $\u$ and has the same support.

\begin{algorithm}[t]
\KwIn{Measurements $\A\x \in \R^q$;
measurement matrix $\A \in \R^{q \times n}$;
sparsity parameter $s$;
bound $R$ on $\twonorm{\x}$. }
\KwOut{ Thresholds $\tau \in \R^q$ }

Partition $\A\x$ into $\A_1 \x$, $\A_2 \x \in \R^{q/2}$.

$\u \gets H_s'( \A_1^* \sign(\A_1\x) )$

$\v \gets  V(\u)$

$\w \gets 2R \cdot( \u + \v )$

\Return{${\bf 0} \in \R^{q/2}, \A_2\w \in \R^{q/2}$}
\caption{$T_0$: Threshold production for hard thresholding}
\label{alg:T0ht}
\end{algorithm}

\begin{algorithm}[t]
\KwIn{ Quantized measurements $\y \in \{\pm 1\}^q$;
measurement matrix $\A \in \R^{q \times n}$;
sparsity parameter $s$;
bound $R$ on $\twonorm{\x}$. 
}
\KwOut{ Approximation $\hat{\x}$ }

Partition $\y$  into $\y_1$, $\y_2 \in \R^{q/2}$.

$\u \gets H_s'( \A_1^* \y_1 )$

$\v \gets V(\u)$

$\t \gets - H_s'( \A_2^* \y_2 )$

\Return{ $2Rf(\ip{\t}{\v}) \cdot \u$, where $f(\xi) = 1 - \frac{ \sqrt{1 - \xi^2}}{\xi} $ }
\caption{$\Delta_0$: Recovery procedure for hard thresholding}
\label{alg:Delta0ht}
\end{algorithm}

The analysis for $T_0$ and $\Delta_0$ relies on the following theorems.

\begin{theorem}\label{ThmRobHT}
Let $1 \geq \delta >0$ and let $\A \in \R^{q \times n}$ have independent standard normal entries.
Suppose that  $n \ge 2q$ and $q \ge c_1 \delta^{-7} s \log (n/s)$. 
Then, with probability at least $1 - c_2 \exp(-c_3 \delta^2 q)$ over the choice of $\A$, 
the following holds for all $s$-sparse $\x \in \R^n$, all $\e \in \R^q$ with $\twonorm{\e} \le c_6 \sqrt{q} \twonorm{\x}$,
and  all $\y \in \{\pm1\}^q$:
\begin{equation}\label{ObjRobHT1}
\twonorm{ \frac{\x}{\twonorm{\x}} - H_s' \inparen{\matA^* \y } }
\le \delta + c_4 \frac{\twonorm{\e}}{\sqrt{q} \twonorm{\x}} + c_5 \sqrt{\frac{d_H(\y,\sgn \inparen{\matA \x + \e})}{q}} 
\end{equation}
The positive constants $c_1$, $c_2$, $c_3$, $c_4$, $c_5$, and $c_6$ above are absolute constants.
\end{theorem}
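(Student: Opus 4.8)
The plan is to reduce the theorem to one deterministic geometric fact about hard thresholding together with a uniform concentration statement for $\matA^*\y$. Since both sides of \eqref{ObjRobHT1} are unchanged under the scaling $(\x,\e)\mapsto(\alpha\x,\alpha\e)$ and the hypothesis $\twonorm{\e}\le c_6\sqrt q\,\twonorm{\x}$ is scale invariant, I would first assume $\twonorm{\x}=1$ and set $c_*:=\sqrt{2/\pi}=\EE[\sgn(\ip{\a_i}{\x})\ip{\a_i}{\x}]$. All that then matters is that the rescaled vector $\w:=\tfrac1q\matA^*\y$ points, after hard thresholding and normalization, in the direction of $\x$.

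\textbf{Geometric reduction.} I would prove the deterministic bound that, for every $\w\in\R^n$ and every unit $s$-sparse $\x$,
\[ \twonorm{\x-H_s'(\w)}\ \le\ \frac{C}{c_*}\,\eta,\qquad \eta:=\max_{\z\in\Sigma_{2s}'}\ip{\w-c_*\x}{\z}. \]
Here $\eta$ is exactly the $\ell_2$-norm of the $2s$ largest-magnitude entries of $\w-c_*\x$. Because $H_s(\w)$ and $c_*\x$ are jointly supported on at most $2s$ coordinates, a short matching argument---pairing each coordinate of $\x$ that $H_s$ discards with a coordinate it keeps, where $\w$ is correspondingly small since $c_*\x$ vanishes there---gives $\twonorm{H_s(\w)-c_*\x}\le C\eta$; normalizing costs only another constant, since $\twonorm{H_s(\w)}\ge c_*-C\eta$ is bounded below once $\eta$ is small. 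This reduces everything to bounding $\eta$ by the right-hand side of \eqref{ObjRobHT1}, uniformly over $\x$, $\e$, and $\y$.

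\textbf{Splitting $\eta$.} Writing $\ip{\w}{\z}=\tfrac1q\sum_i y_i\ip{\a_i}{\z}$, I would split $\y=\sgn(\matA\x+\e)+\bigl(\y-\sgn(\matA\x+\e)\bigr)$. The difference $\y-\sgn(\matA\x+\e)$ is supported on the $d_H(\y,\sgn(\matA\x+\e))$ disagreement coordinates with entries $\pm2$, so Cauchy--Schwarz together with the Gaussian restricted isometry bound $\twonorm{\matA\z}\le\sqrt{2q}$ on $\Sigma_{2s}$ (valid for $q\gtrsim s\log(n/s)$) contributes precisely the term $c_5\sqrt{d_H(\y,\sgn(\matA\x+\e))/q}$. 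For the remaining piece I would establish a \emph{robust sign--product embedding}: with the claimed probability, uniformly over $\x\in\Sigma_s'$, $\z\in\Sigma_{2s}'$, and all admissible $\e$,
\[ \left|\frac1q\sum_i\sgn(\ip{\a_i}{\x}+e_i)\ip{\a_i}{\z}-c_*\ip{\x}{\z}\right|\ \le\ \delta+c_4\frac{\twonorm{\e}}{\sqrt q}. \]
At $\e=\mathbf 0$ this is the classical sign--product embedding, which I would prove by concentrating the bounded-times-subgaussian average $\tfrac1q\sum_i\sgn(\ip{\a_i}{\x})\ip{\a_i}{\z}$ about its mean $c_*\ip{\x}{\z}$ for fixed $(\x,\z)$ via Bernstein, then union bounding over an $\eps$-net of $\Sigma_s'\times\Sigma_{2s}'$ of cardinality $\exp(Cs\log(n/s))$; the net is absorbed once $q\gtrsim\delta^{-2}s\log(n/s)$, while passing from the net to all pairs uses Lipschitz control of the smooth factor $\ip{\a_i}{\z}$ and an anti-concentration estimate limiting how often $\ip{\a_i}{\x}$ is small (so that the discontinuous factor $\sgn(\ip{\a_i}{\x})$ is stable across the net). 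The several $\delta$-budget splits incurred here are what degrade the sampling exponent to $\delta^{-7}$.

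\textbf{Main obstacle.} I expect the genuinely delicate step to be the robustness in $\e$, because $\e\mapsto\sgn(\matA\x+\e)$ is discontinuous and $\e$ ranges over a $q$-dimensional set that cannot be netted. I would handle it through the exact identity $\sgn(\ip{\a_i}{\x}+e_i)-\sgn(\ip{\a_i}{\x})=2\sgn(e_i)\,\ind{\ip{\a_i}{\x}\in I_i}$, where $I_i$ is the interval between $0$ and $-e_i$ of length $|e_i|$, so the $\e$-contribution to $\ip{\w}{\z}$ is $\tfrac2q\sum_i\sgn(e_i)\ind{\ip{\a_i}{\x}\in I_i}\ip{\a_i}{\z}$. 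Its mean over $\matA$ is $O(\twonorm{\e}^2/q)$ by the near-origin integral, and this is dominated by $c_4\twonorm{\e}/\sqrt q$ precisely because $\twonorm{\e}\le c_6\sqrt q$; its fluctuation has variance proxy $\lesssim\tfrac1{q^2}\onenorm{\e}\le\tfrac1{q^{3/2}}\twonorm{\e}$, which a Bernstein bound keeps below $\delta$ on the same net. The real work is making this simultaneously uniform in $\e$: I would do so by bounding the flip count $\sum_i\ind{\ip{\a_i}{\x}\in I_i}$ monotonically through $\onenorm{\e}$ and $\twonorm{\e}$ and invoking the uniform anti-concentration estimate above, so that no net on $\e$ is needed. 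Assembling the three contributions into $\eta$ and feeding the result through the geometric reduction then yields \eqref{ObjRobHT1}.
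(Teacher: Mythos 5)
Your geometric reduction, your splitting of $\y$ into $\sgn(\matA\x+\e)$ plus a Hamming-distance correction, and your treatment of that correction via Cauchy--Schwarz and the restricted isometry property all match the paper's argument (the paper works with $\matA^*_{S\cup T}\y$ for $S=\supp(\x)$, $T=\supp(H_s(\matA^*\y))$ rather than a maximum over $\Sigma_{2s}'$, but these are the same estimate). The gap is in the step you yourself flag as the main obstacle: the ``robust sign--product embedding'' with error term $c_4\twonorm{\e}/\sqrt{q}$, uniform over all $\e$ --- which, because of the ``for all $\e$'' quantifier in the theorem, may be chosen adversarially \emph{after} seeing $\matA$. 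Your flip-count device does not deliver the claimed linear dependence on $\twonorm{\e}/\sqrt{q}$. Concretely: an adversary with budget $\twonorm{\e}=\eps\sqrt{q}$ can flip exactly the set $F=\{i:|\ip{\a_i}{\x}|\le t\}$ at cost $\sum_{i\in F}\ip{\a_i}{\x}^2\approx qt^3$, so taking $t\approx\eps^{2/3}$ buys $|F|\approx\eps^{2/3}q$ flips; feeding this into the Cauchy--Schwarz/RIP bound for the flipped term gives a contribution of order $\sqrt{|F|/q}\approx\eps^{1/3}$, not $\eps$. Your mean and variance computations are fine for a fixed $\e$ independent of $\matA$, but that is not the statement being proved, and ``bounding the flip count monotonically through $\onenorm{\e}$ and $\twonorm{\e}$'' is precisely the step that degrades to a cube root.

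The paper sidesteps this difficulty entirely, and this is where the hypothesis $n\ge 2q$ --- which your proposal never uses --- enters. By the simultaneous $(\ell_2,\ell_1)$-quotient property of Gaussian matrices with $n\ge 2q$, one can write $\e=\matA\u$ with $\twonorm{\u}\le d\,\twonorm{\e}/\sqrt{q}$ and $\onenorm{\u}\le d'\sqrt{s_*}\,\twonorm{\e}/\sqrt{q}$, where $s_*=q/\log(n/q)$. Then $\sgn(\matA\x+\e)=\sgn(\matA(\x+\u))$ \emph{exactly}, the vector $\x+\u$ is effectively $(2s)$-sparse (after first reducing to the case $s\asymp\delta^7 s_*$, a normalization your outline also lacks), and the ordinary sign--product embedding property applied to $(\x+\u)/\twonorm{\x+\u}$ gives error $\delta/2$ with no comparison of sign vectors ever being made; the price of replacing $\x/\twonorm{\x}$ by $(\x+\u)/\twonorm{\x+\u}$ is at most $2\twonorm{\u}/\twonorm{\x+\u}\lesssim\twonorm{\e}/(\sqrt{q}\,\twonorm{\x})$, which is exactly where the linear noise term in \eqref{ObjRobHT1} comes from. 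You would need to import this quotient-property mechanism (or an equivalent one) to close your argument.
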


The proof of Theorem~\ref{ThmRobHT} is given in Section~\ref{sec:proofs}.
Once Theorem~\ref{ThmRobHT} is shown, we will be able to establish 
the following results when the threshold production and recovery procedures $T_0$ and $\Delta_0$ are given by Algorithms~\ref{alg:T0ht} and \ref{alg:Delta0ht}.

\begin{theorem}\label{thm:hardthresh}
Let $1 \geq \delta >0$, let $\A \in \R^{q \times n}$ have independent standard normal entries,
and let $T_0$ and $\Delta_0$ be as in Algorithms \ref{alg:T0ht} and \ref{alg:Delta0ht}.
Suppose that  $n \ge 2q$ and 
$$
q \ge c_1 \delta^{-7} s \log (n/s).
$$ 
Further assume that whenever a signal $\vect{z}$ is measured, the corruption errors satisfy $\| \mathbf{e} \|_\infty \leq c \delta\|\vect{z}\|_2$
and $|\{i : f_i = -1\}| \leq c' \delta q$.
Then, with probablity at least $1 - c_7\exp( - c_8\delta^2 q )$ over the choice of $\A$,
the following holds for all $\x \in RB_2^n \cap \Sigma_s$
:
for $\y$ obeying the measurement model \eqref{eq:noisyquery} with $\tau = T_0( \A\x, \A, s, R)$, 
the vector $\hat{\x} = \Delta_0( \y, \A, s, R )$ satisfies
$$
\twonorm{\x - \hat{\x}} \le \delta R.
$$
The positive constants $c_1$, $c$, $c'$, $c_7$, and $c_8$ above are absolute constants.
\end{theorem}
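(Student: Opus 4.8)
The plan is to analyze the two halves of the measurement budget separately, since Algorithms \ref{alg:T0ht} and \ref{alg:Delta0ht} use the first $q/2$ measurements to estimate the direction of $\x$ and the second $q/2$ to recover its magnitude. First I would unpack the model \eqref{eq:noisyquery} under the thresholds produced by $T_0$. In the first block the threshold is $\mathbf 0$, so each $y_i = f_i\,\sgn(\ip{\a_i}{\x} + e_i)$ is a (possibly flipped, noisy) one-bit measurement of $\x$; in the second block the threshold is $(\A_2\w)_i$ with $\w = 2R(\u+\v)$, so each $y_i = f_i\,\sgn(\ip{\a_i}{\x - \w} + e_i)$ is a one-bit measurement of the shifted signal $\x - \w$. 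Because $\v = V(\u)$ shares the support of $\u$, the vector $\w$ is supported on $\supp(\u)$, so $\x - \w$ is $2s$-sparse; moreover $\w$ depends only on $\A_1$ and $\x$.

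Next I would apply Theorem \ref{ThmRobHT} to each block. Applied to $\A_1$ (sparsity $s$), it places $\u = H_s'(\A_1^*\y_1)$ within distance $\delta + c_4\inftynorm{\e}/\twonorm\x + c_5\sqrt{2|\{i : f_i = -1\}|/q}$ of $\x/\twonorm\x$; the hypotheses $\inftynorm\e \le c\delta\twonorm\x$ and $|\{i: f_i = -1\}| \le c'\delta q$ make the first two terms a small multiple of $\delta$ and the last a small multiple of $\sqrt\delta$. Applied to $\A_2$ (now with sparsity $2s$, which only changes constants in the sample complexity $q/2 \ge c_1\delta^{-7}(2s)\log(n/(2s))$), the same theorem controls $-\t = H_s'(\A_2^*\y_2)$ around $(\x - \w)/\twonorm{\x - \w}$; here the model clause ``whenever a signal $\z$ is measured, $\inftynorm\e \le c\delta\twonorm\z$'' supplies exactly the noise bound $\twonorm{\e_2}\le c_6\sqrt{q}\,\twonorm{\x-\w}$ required by \eqref{ObjRobHT1}. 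The one delicate point is that the signal $\x - \w$ for the second block is itself a function of $\A_1$; I would handle this by conditioning on $\A_1$ and invoking the fact that Theorem \ref{ThmRobHT} holds \emph{simultaneously for all} $2s$-sparse signals on a single high-probability event for $\A_2$, so it applies to the data-dependent $\x - \w$ with no further union bound. A union bound over the two good events then fixes both $\A_1$ and $\A_2$.

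The heart of the argument is the geometry encoded in $f(\xi) = 1 - \sqrt{1-\xi^2}/\xi$. I would first verify the ideal identity: if $\u = \x/\twonorm\x$ exactly and $\v \perp \u$ with $\twonorm\v = 1$, then writing $m = \twonorm\x$ gives $\w - \x = (2R - m)\u + 2R\v$, hence $\ip{\t}{\v} = \ip{\w - \x}{\v}/\twonorm{\w - \x} = 2R/\sqrt{(2R-m)^2 + 4R^2} =: \xi$, which inverts to $m = 2R f(\xi)$ and therefore $2R f(\ip\t\v)\,\u = m\u = \x$: exact recovery. Crucially, since $0 \le m \le R$, the quantity $\xi$ ranges only over the compact interval $[1/\sqrt2,\, 2/\sqrt5] \subset (0,1)$, on which $f$ is smooth with bounded derivative. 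Thus the estimate $\hat\x = 2R f(\ip\t\v)\,\u$ depends in a Lipschitz manner on $\u$, $\v$, and $\t$ near the ideal configuration, and the small errors from the previous step propagate to a bound of the form $\twonorm{\x - \hat\x} \lesssim R\cdot(\text{direction errors})$.

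The main obstacle is the error propagation in this last step rather than any single estimate. Three effects must be combined carefully: (i) $\u$ is only approximately $\x/\twonorm\x$, so $\v = V(\u)$ is not exactly orthogonal to $\x$ and the clean identity $\ip{\w - \x}{\v} = 2R$ acquires a correction; (ii) $\t$ is only approximately the direction of $\w - \x$; and (iii) one must confirm that the perturbed inner product $\ip\t\v$ stays inside the region where $f$ is Lipschitz, which follows because the true $\xi$ is bounded away from $0$ and $1$ by absolute constants, using $\twonorm\x \le R$. Tracking these through the Lipschitz bound on $f$ and choosing the absolute constants $c, c'$ small enough — and calibrating the internal accuracy parameter fed to Theorem \ref{ThmRobHT} against the target error $\delta R$, so that the square-root sign-flip term is absorbed — yields $\twonorm{\x - \hat\x}\le \delta R$ on the intersection of the two good events, whose failure probability is $c_7\exp(-c_8\delta^2 q)$ by the union bound.
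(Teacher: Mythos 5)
Your proposal is correct and follows essentially the same route as the paper's proof: two applications of Theorem~\ref{ThmRobHT} (the second at sparsity $2s$ on the shifted signal $\x-\w$), the exact-recovery identity $\twonorm{\x}=2Rf(\xi)$ with $\xi=\ip{\t}{\v}$ confined to $[1/\sqrt{2},2/\sqrt{5}]$, and Lipschitz propagation of the direction errors through $f$ on that interval. The paper organizes the perturbation analysis via the projection $\x^\sharp$ of $\x$ onto the line spanned by $\u$ and the angle $\theta$, but this is the same computation you describe.
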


Having proposed two methods for recovering both the direction and magnitude of a sparse vector from binary measurements, we now turn to our main result. 

\section{Exponential decay: General framework}
\label{sec:strategy}

In the previous section, we developed two methods for approximately recovering $\x$ from binary measurements.  
Unfortunately, these methods exhibit polynomial error decay in the oversampling factor, and our goal is to obtain an exponential decay.  We can achieve this goal by applying the rough estimation methods iteratively, in batches, with {\em adaptive thresholds/dithers}.  
As we show below, this leads to an extremely accurate recovery scheme.  
To make this framework precise, we first define an {\em order-one recovery scheme} $(T_0,\Delta_0)$. 

\begin{definition}[Order-one recovery scheme]
\label{def:app_rec_scheme}
An order-one recovery scheme with sparsity parameter $s$, measurement complexity $q$, and noise resilience $(\eta, b)$  
is a pair of algorithms $(T_0, \Delta_0)$ such that:
\begin{itemize}
\item The \em thresholding algorithm \em $T_0$ takes a parameter $R$ and,
 optionally, a set of linear measurements $\A\x \in \R^q$
 and the measurement matrix $\A \in \R^{q \times n}$.
It outputs a set of thresholds $\tau \in \R^q$.
\item The \em recovery algorithm \em $\Delta_0$ takes $q$ corrupted quantized measurements of the form \eqref{eq:noisyquery}, i.e.,
\[ y_i = f_i \, \sign( \ip{ \a_i }{\x} - \tau_i + e_i ) ,  \]
where $\e \in \R^q$ is a pre-quantization error and $\mathbf{f} \in \{\pm 1\}^q$ is a post-quantization error. 
It also takes as input the measurement matrix $\A \in \R^{q \times n}$, 
a parameter $R$, 
and, optionally, a sparsity parameter $s$ and
the thresholds $\tau$ returned by $T_0$.
It outputs a vector $\hat{\x} \in \R^n$.
\item With probability at least $1 - C\exp( -cq )$ over the choice of $\A \in \R^{q \times n}$ and the randomness of~$T_0$, the following holds:
for all $\x \in RB_2^n \cap \Sigma_s$, all $\e \in \R^q$ with $\|\e\|_\infty \leq \eta \|\x\|_2$, and all $\mathbf{f} \in \{\pm 1\}^q$ with at most $b$ sign flips, the estimate $\hat{\x} = \Delta_0( \y, \A, R, s, \tau)$ satisfies
\[ \twonorm{\x - \hat{\x}} \leq \frac{R}{4}. \]
\end{itemize}
\end{definition}

We saw two examples of order-one recovery schemes in Section \ref{sec:magnitude recovery}.  The scheme based on second-order cone programming is an order-one recovery
scheme with sparsity parameter $s$, measurement complexity $q = C_0s\log(n/s)$,
and noise resilience $\eta = c_0R$ and $b = 0$.  
The scheme based on iterated hard thresholding is an order-one recovery scheme  with sparsity parameter $s$,
measurement complexity $q = C_1 s\log(n/s)$, and noise resilience $\eta = c_1R$ and $b = c_2q$.  
Above,  $c_0,c_1,c_2, C_0,C_1>0$ are absolute constants.

We use an order-one recovery scheme to build a pair of one-bit quantization and recovery algorithms for sparse vectors that exhibits extremely fast convergence.  
Our quantization and recovery algorithms $Q$ and $\Delta$ are given in Algorithms \ref{alg:Q} and \ref{alg:Delta}, respectively.
They are in reality intertwined, but again we separate them for expositional clarity.

\begin{algorithm}[t]
\KwIn{ Linear measurements $\A \x \in \R^m$;
measurement matrix $\A \in \R^{m \times n}$; 
sparsity parameter $s$;
bound $R$ on $\|\x\|_2$;
parameter $q \geq Cs\log(n/s)$ for the size of batches.}
\KwOut{ Quantized measurements $\y \in \{\pm 1\}^m$ and thresholds $\tau \in \R^m$ }

$T \gets \left\lfloor \frac{m}{q} \right\rfloor$

Partition $\A$ into $T$ blocks $\A^{(1)}, \ldots, \A^{(T)} \in \R^{q \times m}$

$\x_0 \gets {\bf 0}$

\For{ $t = 1,\ldots, T$}
{
	$R_t = 2^{-t + 1}$

	$\tau^{(t)} \gets T_0( \A^{(t)}(\x - \x_{t-1}), \A^{(t)}, R_t )$
	
	$\sigma^{(t)} \gets \A^{(t)} \x_{t-1}$
	
	$\y^{(t)} \gets \mathbf{f}^{(t)} \odot \sign( \A^{(t)} \x - \tau^{(t)} - \sigma^{(t)} + \e^{(t)} ) $
	
	$\x_t \gets H_s(\x_{t-1} + \Delta_0( \y^{(t)}, \A^{(t)}, R_t, \tau^{(t)} ))$
}
\Return{ $\y^{(t)}, \tau^{(t)}$ for $t =1 ,\ldots, T$ }
\caption{$Q$: Quantization}
\label{alg:Q}
\end{algorithm}

\begin{algorithm}[t]
\KwIn{ Quantized measurements $\y \in \{\pm 1\}^m$;
measurement matrix $\A \in \R^{m \times n}$;
 sparsity parameter $s$;
bound $R$ on $\|\x\|_2$;
thresholds $\tau \in \R^m$;
 size of batches $q$.}
\KwOut{ Approximation $\hat{\x} \in \R^n$}
$T \gets \left\lfloor \frac{m}{q} \right\rfloor$

Partition $\A$ into $T$ blocks $\A^{(1)}, \ldots, \A^{(T)} \in \R^{q \times m}$

$\x_0 \gets {\bf 0}$

\For{ $t=1, \ldots, T$}
{
	\begin{equation}
	\label{eq:it thresh}
	\x_t \gets H_s(\x_{t-1} + \Delta_0( \y^{(t)}, \A^{(t)}, R 2^{-t+1}, \tau^{(t)})) 
	\end{equation}
}
\Return{ $\x_T$}
\caption{$\Delta$: Recovery}
\label{alg:Delta}
\end{algorithm}

The intuition motivating Step \eqref{eq:it thresh} is that $\Delta_0( \y^{(t)}, \A^{(t)}, R_t, \tau^{(t)}, )$ estimates $\x - \x_{t-1}$;
hence $\x_t$ approximates $\x$ better than $\x_{t-1}$ does.  
Note the similarity to the intuition motivating iterative hard thresholding,  
with the key difference being that the quantization is also performed iteratively.

\begin{remark}[Computational and storage considerations]
Let us analyze the storage requirements and computational complexity of $Q$ and $\Delta$, both during and after quantization.

We begin by considering the approach based on convex programming.  
In this case, the final storage requirements of the quantizer $Q$ are similar to those in standard one-bit compressed sensing.
The ``algorithm'' $T_0$ is straightforward: it simply draws random thresholds/dithers.  
In particular, we may treat these thresholds as predetermined independent normal random variables in the same way as we treat $\A$.  
If $\A$ and $\tau$ are generated by a short seed, then all that needs to be stored after quantization are the binary measurements $\y \in \{\pm 1\}^q$.
During quantization, the algorithm $Q$ needs to store $\x_t$. 
However, this requires small memory since $\x_t$ is $s$-sparse.  

While the convex programming approach is designed to ease storage burdens, the order-one recovery scheme based on hard thresholding is built for speed.  
In this case, the threshold algorithm $T_0$ (Algorithm \ref{alg:T0ht}) is more complicated, and the adaptive thresholds $\tau$ need to be stored.  
On the other hand, the computation of $\x_t$ is much faster, and both the quantization and recovery algorithms are very efficient. 
\end{remark}

Given an order-one recovery scheme $(T_0,\Delta_0)$, the quantizer $Q$ given in Algorithm \ref{alg:Q} and the recovery algorithm $\Delta$ given in Algorithm \ref{alg:Delta} have the desired exponential convergence rate.  
This is formally stated in the theorem below and proved in Section~\ref{sec:proofs}.

\begin{theorem}\label{thm:outeralg}
Let $(T_0,\Delta_0)$ be an order-one recovery scheme with sparsity parameter $2s$, measurement complexity $q$, and noise resilience $(\eta, b)$.
Fix $R > 0$ and recall that $T := \lfloor m/q \rfloor$.
With probability at least $1 - CT \exp( -c q )$ over the choice of $\A$ and the randomness of $T_0$, 
the following holds for all $\x \in R B_2^n \cap \Sigma_s$, 
all $\e \in \R^m$ with $\inftynorm{\e} \leq \eta 2^{-T} \|\x\|_2$, 
and all $\mathbf{f} \in \{\pm 1\}^m$ with $|\{ i \ : \ f_i = -1 \}| \leq b$ in the measurement model \eqref{eq:noisyquery}:

for $\y \in \{\pm 1\}^m$ and $\tau = Q( \A\x, \A, s, R, q ) \in \R^m$, 
 the output $\hat{\x}$ of $\Delta( \y, \A, s, R, \tau, q )$ satisfies
\begin{equation}
 \|\x - \hat{\x}\|_2 \leq R \, 2^{-T}.
\end{equation}
The positive constants $\eta$, $b$, $c$, and $C$ above are absolute constants.
\end{theorem}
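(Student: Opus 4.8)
The plan is to prove the bound by induction on the batch index $t$, showing that the iterates $\x_t$ produced jointly by $Q$ and $\Delta$ satisfy the invariant $\twonorm{\x - \x_t} \le R\,2^{-t}$; the theorem is then the case $t = T$, since $\hat{\x} = \x_T$. I would phrase the invariant so that at the start of batch $t$ one has $\twonorm{\x - \x_{t-1}} \le R_t := R\,2^{-t+1}$. The base case is immediate: $\x_0 = \mathbf{0}$ and $\x \in RB_2^n$ give $\twonorm{\x - \x_0} = \twonorm{\x} \le R = R_1$.

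The heart of the argument is the inductive step, which reduces one batch to a single application of the order-one recovery scheme to the \emph{residual} $\w_t := \x - \x_{t-1}$. Two structural observations drive this. First, since $\x$ is $s$-sparse and $\x_{t-1} = H_s(\cdots)$ is $s$-sparse, the residual $\w_t$ lies in $\Sigma_{2s}$ --- which is exactly why the scheme is required to have sparsity parameter $2s$. Second, substituting $\sigma^{(t)} = \A^{(t)}\x_{t-1}$ into the definition of $\y^{(t)}$ collapses the measurement to
\[
\y^{(t)} = \mathbf{f}^{(t)} \odot \sign\!\left( \A^{(t)} \w_t - \tau^{(t)} + \e^{(t)} \right),
\]
which is precisely the corrupted, dithered one-bit model \eqref{eq:noisyquery} for the signal $\w_t$ with matrix $\A^{(t)}$, thresholds $\tau^{(t)} = T_0(\A^{(t)}\w_t, \A^{(t)}, R_t)$, pre-quantization error $\e^{(t)}$, and sign flips $\mathbf{f}^{(t)}$. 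By the induction hypothesis $\w_t \in R_t B_2^n \cap \Sigma_{2s}$, so the scheme's hypotheses are met once the noise budget is checked. I would then invoke the order-one guarantee to conclude that $\hat{\w}_t := \Delta_0(\y^{(t)}, \A^{(t)}, R_t, \tau^{(t)})$ obeys $\twonorm{\w_t - \hat{\w}_t} \le R_t/4$. Since $\x_{t-1} + \hat{\w}_t$ then approximates $\x$ to within $R_t/4$ and $\x$ is $s$-sparse, the near-idempotence of hard thresholding, namely $\twonorm{\x - H_s(\v)} \le 2\twonorm{\x - \v}$ for $s$-sparse $\x$, yields $\twonorm{\x - \x_t} \le 2 \cdot R_t/4 = R_t/2 = R\,2^{-t}$, which is exactly the invariant at level $t$.

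For the probability and the noise bookkeeping, I would note that the $T$ batches use disjoint, independent blocks $\A^{(1)}, \dots, \A^{(T)}$ together with the fresh internal randomness of $T_0$, so the order-one success events are independent across batches and a union bound gives the stated failure probability $CT\exp(-cq)$. The noise budget threads through precisely because of the $2^{-T}$ factor in the hypothesis: $\e^{(t)}$ is a subvector of $\e$, so $\inftynorm{\e^{(t)}} \le \inftynorm{\e} \le \eta\,2^{-T}\twonorm{\x}$, and since $R_t \ge R\,2^{-T+1} \ge 2^{-T}\twonorm{\x}$ this stays within the per-batch resilience at the coarsest scale $R_t$; likewise $\mathbf{f}$ carries at most $b$ flips overall, hence at most $b$ within any single batch. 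Thus the order-one hypotheses hold at every step of the recursion.

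The step I expect to be the main obstacle --- and the reason the theorem is nontrivial despite the clean induction --- is the handling of \emph{adaptivity}. The residual $\w_t$ and the thresholds $\tau^{(t)}$ are not fixed in advance: they are determined by $\x$ together with the outcomes of batches $1, \dots, t-1$, so $\w_t$ is a random, data-dependent vector that cannot be decoupled from the analysis of batch $t$. The argument only goes through because the order-one guarantee in Definition \ref{def:app_rec_scheme} is \emph{uniform} --- on a single high-probability event it recovers \emph{every} signal in $R_t B_2^n \cap \Sigma_{2s}$ simultaneously. Conditioning on this uniform event lets me apply the guarantee to whatever $\w_t$ the adaptive process happens to produce, while the independence of $\A^{(t)}$ from $(\A^{(1)}, \dots, \A^{(t-1)})$ keeps the per-batch events independent for the union bound. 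Getting this conditioning order right, rather than controlling any single fixed estimate, is the crux of the proof.
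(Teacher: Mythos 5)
Your proposal is correct and follows essentially the same route as the paper's proof: induction on the invariant $\twonorm{\x-\x_t}\le R\,2^{-t}$, reduction of batch $t$ to one application of the order-one scheme to the $2s$-sparse residual $\x-\x_{t-1}$ (using $\sigma^{(t)}=\A^{(t)}\x_{t-1}$ to collapse the measurements), the factor-$2$ loss from $\twonorm{\x-H_s(\z)}\le 2\twonorm{\x-\z}$, and a union bound over the $T$ batches. Your added remarks on uniformity of the order-one guarantee over all signals in $R_tB_2^n\cap\Sigma_{2s}$ (needed because the residual is data-dependent) make explicit a point the paper leaves implicit, and are a welcome clarification rather than a deviation.
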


Our two order-one recovery schemes each have measurement complexity $q = Cs\log(n/s)$.  This implies the announced exponential decay in the error rate.
\begin{cor}
\label{cor:main}
Let $Q,\Delta$ be as in Algorithms \ref{alg:Q} and \ref{alg:Delta} with one-bit recovery schemes $(T_0,\Delta_0)$ given either by Algorithms (\ref{alg:T0socp},\ref{alg:Delta0socp}) or (\ref{alg:T0ht},\ref{alg:Delta0ht}).   
Let $\matA \in \R^{m \times n}$ have independent standard normal entries. 
Fix $R > 0$ and recall that $\lambda = m/(slog(n/s))$.  
With probability at least $1 - C\lambda \exp( -cs\log(n/s) )$ over the choice of $\A$
 and the randomness of $T_0$, the following holds 
for all $\x \in R B_2^n \cap \Sigma_s$, 
all $\e \in \R^m$ with $\inftynorm{\e} \leq \eta 2^{-T} \|\x\|_2$, 
and all $\mathbf{f} \in \{\pm 1\}^m$ with $|\{ i \ : \ f_i = -1 \}| \leq b$ in the measurement model \eqref{eq:noisyquery}
($b = 0$ if $(T_0,\Delta_0)$ is based on convex programming or $b = cs\log(n/s)$ if $(T_0,\Delta_0)$ is based on hard thresholding): 
 
for $\y \in \{\pm 1\}^m$ $\tau = Q( \A\x, \A, s, R, q ) \in \R^m$, 
 the output $\hat{\x}$ of $\Delta( \y, \A, s, R, \tau, q )$ satisfies
\begin{equation}
 \|\x - \hat{\x}\|_2 \leq R \, 2^{-c\lambda}.
\end{equation}
The positive constants $\eta$, $c'$, $c$, and $C$ above are absolute constants.
\end{cor}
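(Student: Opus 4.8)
The plan is to read Corollary~\ref{cor:main} as a direct instantiation of the general framework: once the two concrete schemes of Section~\ref{sec:magnitude recovery} are certified as order-one recovery schemes in the sense of Definition~\ref{def:app_rec_scheme}, the corollary follows from Theorem~\ref{thm:outeralg} by bookkeeping of constants, with no new probabilistic estimate required.

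First I would verify that each of the pairs $(T_0,\Delta_0)$ given by Algorithms~\ref{alg:T0socp}--\ref{alg:Delta0socp} and \ref{alg:T0ht}--\ref{alg:Delta0ht} is an order-one recovery scheme with sparsity parameter $2s$ and measurement complexity $q = Cs\log(n/s)$. For the convex-programming scheme this is Theorem~\ref{thm:linpro} applied with the fixed accuracy $\delta = 1/4$, so that the conclusion $\twonorm{\x - \hat{\x}} \le \delta R$ becomes the required $R/4$ bound, and run at sparsity level $2s$; reading off the hypotheses gives noise resilience $\eta = c_0 R$, $b = 0$. For the hard-thresholding scheme the same is done with Theorem~\ref{thm:hardthresh}, yielding $\eta = c_1 R$ and $b = c_2 q$. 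In both cases, running at sparsity $2s$ only inflates the measurement complexity by a constant, since $C(2s)\log(n/(2s)) \asymp s\log(n/s)$, so $q = Cs\log(n/s)$ remains valid after renaming $C$.

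Second I would apply Theorem~\ref{thm:outeralg} to these schemes, which immediately yields, on the stated high-probability event, the bound $\twonorm{\x - \hat{\x}} \le R\,2^{-T}$ with $T = \lfloor m/q \rfloor$, uniformly over all $\x \in RB_2^n \cap \Sigma_s$ and all admissible noise. It then remains only to rewrite $2^{-T}$ and the failure probability in terms of $\lambda$. Since $q = Cs\log(n/s)$, we have $m/q = \lambda/C$, hence $T = \lfloor \lambda/C \rfloor \ge \lambda/C - 1$, and therefore $2^{-T} \le 2\cdot 2^{-\lambda/C} = 2^{-c\lambda}$ after absorbing the constant factor into the exponent (this uses $\lambda \ge C$, which is forced by $q \ge Cs\log(n/s)$ together with $T \ge 1$). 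For the probability, Theorem~\ref{thm:outeralg} gives $1 - CT\exp(-cq)$; using $T \le \lambda$ and $q = Cs\log(n/s)$ turns the exponent into $-c's\log(n/s)$, so the failure probability is at most $C\lambda\exp(-c's\log(n/s))$, matching the statement after renaming constants.

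The proof has no genuine obstacle---all the analytic content lives in Theorem~\ref{thm:outeralg} and in the order-one guarantees of Theorems~\ref{thm:linpro} and~\ref{thm:hardthresh}. The only point demanding care is the propagation of the noise-resilience hypotheses across the $T$ batches: the corruption budget $\inftynorm{\e} \le \eta 2^{-T}\twonorm{\x}$ and the $b$ sign flips must certify the defining $R_t/4$ guarantee at every scale $R_t = R\,2^{-t+1}$, since the effective signal measured in batch $t$ is the residual $\x - \x_{t-1}$ of norm $\approx R_t$. This matching of the per-batch noise level to the shrinking residual is precisely what Theorem~\ref{thm:outeralg} establishes, so here it suffices to observe that the hypotheses of the corollary feed exactly into the hypotheses of that theorem.
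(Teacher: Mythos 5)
Your proposal is correct and follows exactly the route the paper intends: certify the two schemes of Section~\ref{sec:magnitude recovery} as order-one recovery schemes (at sparsity $2s$, with $q = Cs\log(n/s)$) via Theorems~\ref{thm:linpro} and~\ref{thm:hardthresh} with a fixed $\delta = 1/4$, then invoke Theorem~\ref{thm:outeralg} and translate $T = \lfloor m/q\rfloor$ and the union-bound failure probability into the $\lambda$-dependent forms. The paper leaves this derivation implicit ("Our two order-one recovery schemes each have measurement complexity $q=Cs\log(n/s)$. This implies the announced exponential decay"), and your bookkeeping of the $2s$ sparsity inflation and the $2^{-T}\le 2^{-c\lambda}$ conversion fills it in correctly.
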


\section{Proofs}\label{sec:proofs}

\subsection{Exponentially decaying error rate from order-one recovery schemes}

First, we prove Theorem \ref{thm:outeralg} which states that, given an appropriate order-one recovery scheme, the recovery algorithm $\Delta$ in Algorithm \ref{alg:Delta} converges with exponentially small reconstruction error when the measurements are obtained by the quantizer $Q$ of Algorithm \ref{alg:Q}.

\begin{proof}[Proof of Theorem~\ref{thm:outeralg}] \sloppy
For $\x \in RB_2^n \cap  \Sigma_s$, we verify by induction on $t \in \{0,1,\ldots,T\}$
that 
$$
\twonorm{ \x - \x_t } \le R 2^{-t}.
$$
This induction hypothesis holds for $t=0$.
Now, suppose that it holds for $t-1$, $t \in \{1,\ldots,T\}$.
Consider $\Delta_0(\y^{(t)}, \A^{(t)}, R_t, \tau^{(t)})$, the estimate returned by the order-one recovery scheme in \eqref{eq:it thresh}.
By definition, the thresholds $\tau^{(t)}$ were obtained in step $t$ by running $T_0$ on 
$A^{(t)}(\x - \x_{t-1})$.  Similarly, the quantized measurements $\y^{(t)}$ are formed by quantizing (with noise) the affine measurements
\[ \A^{(t)} \x - \sigma^{(t)} - \tau^{(t)}  = \A^{(t)}(\x - \x_{t-1}) - \tau^{(t)}. \]
Thus, we have effectively run the order-one recovery scheme on the $2s$-sparse vector $\x - \x_t$.
By the guarantee of the order-one recovery algorithm, with probability at least $1 - C\exp( -cq )$, 
\[ \twonorm{ (\x - \x_{t-1}) - \Delta_0(\y^{(t)}, \A^{(t)}, R_t, \tau^{(t)})  } \leq R_t/4  = R2^{-t + 1}/4. \]
Suppose that this occurs.
Let 
\[ \z = \x_{t-1} + \Delta_0(\y^{(t)}, \A^{(t)}, R_t, \tau^{(t)}), \]
so $\twonorm{\x - \z} \leq R2^{-t + 1}/4.$
Since $\x_t = H_s(\z)$ is the best $s$-term approximation to $\z$, it follows that
$$
\twonorm{ \x -\x_t } = \twonorm{ \x - H_s(\z) }
\le \twonorm{ \x - \z } + \twonorm{ H_s(\z) - \z }
\le 2 \twonorm{ \x - \z } \le  R 2^{-t}.
$$
Thus, the induction hypothesis holds for $t$.
A union bound over the $T$ iterations completes the proof,
since the announced result is the inductive hypothesis in the case that $t=T$.
\end{proof}

\subsection{Hard-thresholding-based order-one recovery scheme}

The proof of Theorem~\ref{ThmRobHT}
relies on three properties of random matrices $\A \in \R^{q \times n}$ with independent standard normal entries.
In their descriptions below, the positive constants $c$, $C$, and $d$ are absolute constants.
\begin{itemize}
\item The \em restricted isometry property \em of order $s$  
(\cite[Theorems 9.6 and 9.27]{foucart2013}):
for any $\delta >0$,
with failure probability at most $2 \exp(-c\delta^{2}q)$,
the estimates
\begin{equation}\label{RIP}
(1-\delta) \twonorm{\x}^2 \le \frac{1}{q} \twonorm{\matA \x}^2 \le (1+\delta) \twonorm{\x}^2
\end{equation}
hold for all $s$-sparse $\x \in \R^n$ provided $q \ge C \delta^{-2} s \log(n/s)$. 
\item The \em sign product embedding property \em of order $s$ (\cite{jacques2013quantized,pv-noisy-1bit}): 
for any $\delta >0$,
with failure probability at most $8 \exp(-c\delta^{2}q)$,
the estimates
\begin{equation}\label{SPEP}
\left| \frac{\sqrt{\pi/2}}{q} \ip{\matA \w}{\sgn \inparen{\matA \x}} - \ip{\w}{\x} \right| \le \delta 
\end{equation}
hold for all effectively $s$-sparse $\w,\x \in \R^n$ with $\twonorm{\w}=\twonorm{\x}=1$ provided $q \ge C \delta^{-6} s \log(n/s)$. 
\item The \em $\ell_1$-quotient property \em (\cite{Wojtaszczyk2009} or \cite[Theorem 11.19]{foucart2013}):
if $n \ge 2q$,
then with failure probability at most $\exp(-cq)$,
every $\e \in \R^q$ can be written as
\begin{equation}\label{QP}
\e = \matA \u
\qquad \mbox{with} \quad
\onenorm{\u} \le d \sqrt{s_*} \twonorm{\e}/\sqrt{q}
\quad
\mbox{where }
s_* := \frac{q}{\log(n/q)}.
\end{equation}
\end{itemize}
Combining the $\ell_1$-quotient property and the restricted isometry property (of order $2s$ for a fixed~$\delta \in (0,1/2)$, say)
yields the \em simultaneous $(\ell_2,\ell_1)$-quotient property \em (use, for instance, \cite[Theorem~6.13 and Lemma~11.16]{foucart2013}); that is, 
there are absolute constants $d,d'>0$ such that every $\e \in \R^q$ can be written as
\begin{equation} \label{SimQP}
\e  = \matA \u
\qquad \mbox{with} \quad
\left\{
\begin{matrix}
\twonorm{\u} & \le & d \twonorm{\e}/\sqrt{q},\\
\onenorm{\u} & \le &  d' \sqrt{s_*} \twonorm{\e}/\sqrt{q}.
\end{matrix}
\right.
\end{equation}

\begin{proof}[Proof of Theorem \ref{ThmRobHT}]
We target the inequalities
\begin{equation}\label{ObjRobHT2}
\twonorm{ \frac{\x}{\twonorm{\x}} - \frac{\sqrt{\pi/2}}{q} H_s \inparen{\matA^* \y } }
\le \delta + c_4 \frac{\twonorm{\e}}{\sqrt{q} \, \|\x\|_2} + c_5 \sqrt{\frac{d_H(\y,\sgn \inparen{\matA \x + \e})}{q}} .
\end{equation}
The desired inequalities \eqref{ObjRobHT1} then follows modulo a change of constants, 
because $H'_s \inparen{\matA^* \y}$ is the best unit-norm approximation to $ \sqrt{\pi/2} \, q^{-1} H_s \inparen{\matA^* \y}$, 
so that
\begin{align*}
\twonorm{ \frac{\x}{\twonorm{\x}} - H_s' \inparen{\matA^* \y } }
& \le 
\twonorm{ \frac{\x}{\twonorm{\x}} - \frac{\sqrt{\pi/2}}{q} H_s \inparen{\matA^* \y } }
+
\twonorm{  H_s' \inparen{\matA^* \y } -  \frac{\sqrt{\pi/2}}{q} H_s \inparen{\matA^* \y }}\\
& \le 2 \twonorm{ \frac{\x}{\twonorm{\x}} - \frac{\sqrt{\pi/2}}{q} H_s \inparen{\matA^* \y } }.
\end{align*}
With $s_* = q /\log( n / q)$ as in \eqref{QP},
we remark that it is enough to consider the case $s  = c s_*$, $c := c_1^{-1} \delta^7$.
Indeed, the inequality $q \ge c_1 \delta^{-7} s \log(n/s)$
yields $q \ge c^{-1} s \log(n/q)$, i.e.,  $s \le c s_*$.
Then \eqref{ObjRobHT2} for $s$ follows from \eqref{ObjRobHT2} for $cs_*$
modulo a change of constants 
because $H_s(\matA^* \y)$ is the best $s$-term approximation to $H_{c s_*}(\matA^* \y)$,
so that
\begin{align*}
\Bigg\| \frac{\x}{\twonorm{\x}} &- \frac{\sqrt{\pi/2}}{q} H_s \inparen{\matA^* \y } \Bigg\|_2 \\
& \le 
\twonorm{ \frac{\x}{\twonorm{\x}} - \frac{\sqrt{\pi/2}}{q} H_{cs_*} \inparen{\matA^* \y } }
+
\twonorm{\frac{\sqrt{\pi/2}}{q} H_s \inparen{\matA^* \y } - \frac{\sqrt{\pi/2}}{q} H_{cs_*} \inparen{\matA^* \y }}\\
& \le 2 \twonorm{ \frac{\x}{\twonorm{\x}} - \frac{\sqrt{\pi/2}}{q} H_{cs_*} \inparen{\matA^* \y } }.
\end{align*}

We now assume that $s = c s_*$.
This reads $q = c_1 \delta^{-7} s \log(n/q)$
and arguments similar to \cite[Lemma C.6(c)]{foucart2013} lead to $q \ge (c_1 \delta^{-7} / \log(ec_1\delta^{-7})) s \log(n/s)$.
Thus, if $c_1$ is chosen large enough at the start, we have
$q \ge C \delta^{-6} s \log( n / s)$.
This ensures that the sign product embedding property \eqref{SPEP} of order $2s$ with constant $\delta/2$ holds with high probability.
Likewise, the restricted isometry property \eqref{RIP} of order $2s$ with constant $9/16$, say, holds with high probability.
In turn, the simultaneous $(\ell_2,\ell_1)$-quotient property \eqref{SimQP} holds with high probability.

We place ourselves in the situation where all three properties hold simultaneously,
which occurs with failure probability at most $c_2 \exp(-c_3 \delta^2 q)$ for some absolute constants $c_2,c_3 > 0$.
Then, writing $S=\supp \inparen{\x}$ and $T = \supp \inparen{H_s \inparen{ \matA^* \y } }$,
we remark that $H_s \inparen{\matA^* \y}$ is the best $s$-term approximation to $\matA_{S \cup T}^* \y$,
so that
\begin{align}
\nonumber
\twonorm{ \frac{\x}{\twonorm{\x}} - \frac{\sqrt{\pi/2}}{q} H_s \inparen{\matA^* \y } }
& \le \twonorm{ \frac{\x}{\twonorm{\x}} - \frac{\sqrt{\pi/2}}{q} \matA_{S \cup T}^* \y  }
+ \twonorm{ \frac{\sqrt{\pi/2}}{q} H_s \inparen{\matA^* \y } - \frac{\sqrt{\pi/2}}{q} \matA_{S \cup T}^* \y }\\
\label{StepRobHT0}
& \le 2 \twonorm{ \frac{\x}{\twonorm{\x}} - \frac{\sqrt{\pi/2}}{q} \matA_{S \cup T}^* \y  }.
\end{align}
We continue with the fact that 
\begin{align} 
\nonumber
\Bigg\| \frac{\x}{\twonorm{\x}} & - \frac{\sqrt{\pi/2}}{q} \matA_{S \cup T}^* \y  \Bigg\|_2\\
\label{StepRobHT1}
& \le \twonorm{ \frac{\x}{\twonorm{\x}} - \frac{\sqrt{\pi/2}}{q} \matA_{S \cup T}^* \sgn \inparen{\matA \x + \e}  }
+ \frac{\sqrt{\pi/2}}{q} \Bigg\| \matA_{S \cup T}^* \inparen{\y - \sgn \inparen{\matA \x + \e}} \Bigg\|_2.
\end{align} 
The second term on the right-hand side of \eqref{StepRobHT1} can be bounded with the help of the restricted isometry property \eqref{RIP} as
\begin{align*}
\twonorm{ \matA^*_{S \cup T} \inparen{\y - \sgn \inparen{\matA \x + \e}} }^2
& =
\ip{\matA_{S \cup T} \matA^*_{S \cup T} \inparen{\y - \sgn \inparen{\matA \x + \e}}}{ \y - \sgn \inparen{\matA \x + \e}}\\
& \le 
\twonorm{\matA_{S \cup T} \matA^*_{S \cup T} \inparen{\y - \sgn \inparen{\matA \x + \e}}}
\twonorm{\y - \sgn \inparen{\matA \x + \e}}\\
& \le 
\sqrt{1+\frac{9}{16}} \sqrt{q}
\twonorm{ \matA^*_{S \cup T} \inparen{\y - \sgn \inparen{\matA \x + \e}}}
\twonorm{ \y - \sgn \inparen{\matA \x + \e}}.
\end{align*}
Simplifying by $\twonorm{ \matA^*_{S \cup T} \inparen{\y - \sgn \inparen{\matA \x + \e}}}$, we obtain
\begin{equation} \label{StepRobHT2}
\twonorm{ \matA^*_{S \cup T} \inparen{\y - \sgn \inparen{\matA \x + \e}}}
\le \frac{5}{4} \sqrt{q} \twonorm{ \y - \sgn \inparen{\matA \x + \e}}
= \frac{5}{2} \sqrt{q} \sqrt{d_H \inparen{\y , \sgn{\inparen{\matA \x + \e}}}}.
\end{equation}
The first term on the right-hand side of \eqref{StepRobHT1} can be bounded with the help of the
simultaneous $(\ell_2,\ell_1)$-quotient property \eqref{SimQP} and of the sign product embedding property \eqref{SPEP}.
We start by writing $\matA \x + \e$ as $\matA \inparen{\x+\u}$ for some $\u \in \R^n$ as in \eqref{SimQP}.
We then notice that
\begin{align*}
\twonorm{\x+\u} 
& \ge \twonorm{\x} - \twonorm{\u} \ge \twonorm{\x} - d \twonorm{\e}/\sqrt{q} \ge (1-d c_6)\twonorm{\x},\\
\onenorm{\x+\u} 
& \le \onenorm{\x} + \onenorm{\u} \le \sqrt{s} \twonorm{\x} + d' \sqrt{s_*} \twonorm{\e}/\sqrt{q}
\le \inparen{ \frac{1}{\sqrt{2}} + \frac{d' c_6}{\sqrt{2c}} } \sqrt{2s} \twonorm{\x}.
\end{align*}
Hence, if $c_6$ is chosen small enough at the start, then
we have $\onenorm{\x+\u} \le \sqrt{2s} \twonorm{\x+\u}$,
i.e., $\x+\u$ is effectively $(2s)$-sparse.
The sign product embedding property \eqref{SPEP} of order $2s$ then implies that
\begin{align*}
\Bigg| \Bigg\langle\w, \frac{\x+\u}{\twonorm{\x+\u}} & - \frac{\sqrt{\pi/2}}{q}  \matA_{S \cup T}^* \sgn \inparen{\matA \x + \e} \Bigg\rangle  \Bigg|\\
& = \left| 
\ip{\w}{\frac{\x+\u}{\twonorm{\x+\u}}}
-
\frac{\sqrt{\pi/2}}{q} \ip{\matA \w}{ \sgn \inparen{ \matA \inparen{\x+\u} }  }
 \right| 
 \le \frac{\delta}{2}
\end{align*}
for all unit-normed $\w \in \R^n$ supported on $S \cup T$.
This gives
$$
\twonorm{ \frac{\x+\u}{\twonorm{\x+\u}} - \frac{\sqrt{\pi/2}}{q}  \matA_{S \cup T}^* \sgn \inparen{\matA \x + \e}}  \le \frac{\delta}{2},
$$
and in turn
\begin{align*}
\twonorm{ \frac{\x}{\twonorm{\x}} - \frac{\sqrt{\pi/2}}{q}  \matA_{S \cup T}^* \sgn \inparen{\matA \x + \e}} 
& \le \frac{\delta}{2} + \twonorm{ \frac{\x}{\twonorm{\x}} - \frac{\x+\u}{\twonorm{\x+\u}} }\\
& \le \frac{\delta}{2} + \twonorm{ \inparen{\frac{1}{\twonorm{\x}} - \frac{1}{\twonorm{\x+\u}} }\x} + \twonorm{\frac{\u}{\twonorm{\x+\u}}}\\
& \le \frac{\delta}{2} + \frac{ | \twonorm{\x+\u} - \twonorm{\x} |}{\twonorm{\x+\u}} + \frac{\twonorm{\u}}{\twonorm{\x+\u}}
\le \frac{\delta}{2} + \frac{2 \twonorm{\u}}{\twonorm{\x+\u}}.
\end{align*}
From $\twonorm{\u} \le d \twonorm{\e} / \sqrt{q}$
and  
$\twonorm{\x+\u} 
\ge (1-dc_6) \twonorm{\x} \ge \twonorm{\x}/2$ for $c_6$ is small enough,
we derive that
\begin{equation} \label{StepRobHT3}
\twonorm{ \frac{\x}{\twonorm{\x}} - \frac{\sqrt{\pi/2}}{q}  \matA_{S \cup T}^* \inparen{\sgn \inparen{ \matA \x + \e }}} 
\le \frac{\delta}{2} + \frac{4d \twonorm{\e}}{\sqrt{q}\twonorm{\x}}.
\end{equation}
Substituting \eqref{StepRobHT2} and \eqref{StepRobHT3} into \eqref{StepRobHT1} enables us to derive the desired result \eqref{ObjRobHT2} from \eqref{StepRobHT0}.
\end{proof}

The proof of Theorem~\ref{thm:hardthresh} presented next follows from Theorem~\ref{ThmRobHT}.

\begin{proof}[Proof of Theorem~\ref{thm:hardthresh}]
For later purposes, we introduce the constant
$$
C:= \max_{\xi \in \left[\frac{1}{\sqrt{2}}-\frac{1}{20},\frac{2}{\sqrt{5}}+\frac{1}{20} \right] } \left| f'(\xi) \right|
\ge 2,
\qquad
f(\xi):= 1-\frac{\sqrt{1-\xi^2}}{\xi}.
$$
Given $\x \in R B_2^n \cap \Sigma_s$,
we acquire a corrupted version $\y_1 \in \{\pm 1\}^{q/2}$ of the quantized measurements $\sign( \matA_1 \x)$.
Since the number of rows of the matrix $\matA_1 \in \R^{(q/2) \times n}$ is large enough for
Theorem \ref{ThmRobHT} to hold with $\delta_0 = \delta/(4(1+2C))$ instead of $\delta$,
we obtain
$$
\twonorm{\frac{\x}{\twonorm{\x}} - \u}
\le \delta_0 + c_4 c \delta + c_5 c' \delta \le 2 \delta_0,
\qquad 
\u := H_s'(\matA_1^* \y_1),
$$
provided that the constants $c$ and $c'$ are small enough.
With $\x^\sharp$ denoting the orthogonal projection of $\x$ onto the line spanned by $\u$, we have
$$
\twonorm{\x-\x^\sharp} \le \twonorm{ \x - \twonorm{\x} \u } \le 2 \delta_0 \twonorm{\x}.
$$
We now consider a unit-norm vector $\v \in \R^n$ supported on $\supp(\u)$ and orthogonal to $\u$.
The situation in the plane spanned by $\u$ and $\v$ is summarized in Figure~\ref{fig:situation}.

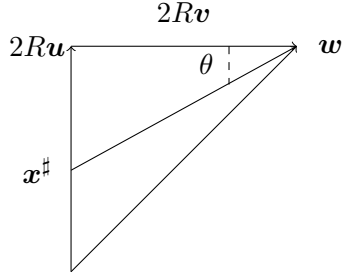
\begin{figure}
\begin{center}
\begin{tikzpicture}[scale=3]
\draw[->] (0,0) -- (0,1);
\draw[->] (0,1) -- (1,1);
\draw[->] (0,0) -- (1,1);
\draw (1,1) -- (0,.45);
\node at (-.15, .45) {$\x^\sharp$};
\node at (-.15, 1) {$2R\u$};
\node at (.5, 1.15) {$2R\v$};
\node at (1.15, 1) {$\w$};
\draw[dashed] (.7, 1) -- (.7,.83);
\node at (.6, .92) {$\theta$};
\end{tikzpicture}
\end{center}
\caption{The situation in the plane spanned by $\u$ and $\v$.}
\label{fig:situation}
\end{figure}

We point out that $\|\x^\sharp\| \le \|\x\| \le R$ gave $\|\x^\sharp\|_2 \le 2R$,
but that $2R$ was just an arbitrary choice to ensure that $\cos(\theta)$ stays away from $1$---here, $\cos(\theta) \in [1/\sqrt{2}, 2/\sqrt{5}]$.
Forming the $s$-sparse vector $\w:=2R \cdot (\u+\v)$, 
we now acquire a corrupted version $\y_2 \in \{\pm 1\}^{q/2}$ of the quantized measurements $\sign( \matA_2 (\x- \w))$
on the $2s$-sparse vector $\x - \w$.
Since the number of rows of the matrix $\matA_2 \in \R^{(q/2) \times n}$ is large enough for
Theorem \ref{ThmRobHT} to hold with $\delta_0 = \delta/(4(1+2C))$ instead of $\delta$ and $2s$ instead of $s$,
we obtain
$$
\twonorm{ \frac{\w - \x}{\twonorm{\w - \x}} - \t }
\le \delta_0 + c_4 c \delta + c_5 c' \delta \le 2 \delta_0,
\qquad
\t = - H_s'(\matA_2^* \y_2).
$$
We deduce that $\t$ also approximates $(\w - \x^\sharp)/\twonorm{\w - \x^\sharp}$ with error 
\begin{align*}
\Bigg\| & \frac{\w- \x^\sharp}{\|\w - \x^\sharp\|} - \t \Bigg\|_2\\
& \le
\twonorm{
\frac{\w-\x^\sharp}{\twonorm{\w-\x^\sharp}} - \frac{\w-\x}{\twonorm{\w-\x^\sharp}}
}
+
\twonorm{ 
\left( \frac{1}{\twonorm{\w - \x^\sharp}} - \frac{1}{\twonorm{\w-\x}} \right) (\w-\x)
}
+\twonorm{
\frac{\w-\x}{\twonorm{\w-\x}} - \t
} \\
& \le 
\frac{\twonorm{\x-\x^\sharp}}{\twonorm{\w-\x^\sharp}} + \frac{\left| \big\| \w-\x \big\|_2 - \twonorm{\w-\x^\sharp} \right|}{\twonorm{\w-\x^\sharp}}
+ 2 \delta_0 
 \le 2 \frac{\twonorm{\x-\x^\sharp}}{\twonorm{\w-\x^\sharp}}  + 2 \delta_0
\le 2 \frac{2 \delta_0 \twonorm{\x}}{2R} +  2 \delta_0\\
&  \le 4 \delta_0.
\end{align*}
It follows that $\ip{\t}{\v}$  approximates $ \ip{(\w-\x^\sharp)/\|\w-\x^\sharp\|}{\v} = \cos(\theta)$ with error
$$
|\cos(\theta) - \ip{\t}{\v}| 
= \left| \ip{\frac{\w-\x^\sharp}{\twonorm{\w-\x^\sharp}} - \t}{\v} \right|
\le \twonorm{ \frac{\w - \x^\sharp}{\twonorm{\w-\x^\sharp}} - \t } \twonorm{\v} \le 4 \delta_0.
$$
We then notice that 
$$
\twonorm{\x^\sharp} = 2R - 2R \tan(\theta) = 2R f(\cos(\theta)),
$$
so that $2R f(\ip{\t}{\v})$ approximates $\twonorm{\x^\sharp}$ with error
$$
\left| \twonorm{\x^\sharp} -  2R f(\ip{\t}{\v}) \right|
= 2R |f(\cos(\theta)) - f(\ip{\t}{\v})|
\le 2R \, C \, |\cos(\theta) - \ip{\t}{\v}| 
\le 2R \, C \, 4 \delta_0 = 8 C \delta_0 R.
$$
Here, we used the facts that $\cos(\theta) \in [1/\sqrt{2}, 2/\sqrt{5}]$
and that $\ip{\t}{\v} \in [1/\sqrt{2}-4 \delta_0, 2/\sqrt{5}+4\delta_0] \subseteq [1/\sqrt{2}-1/20, 2/\sqrt{5}+1/20]$.
We derive that 
\begin{align*}
\left| \Big\| \x \Big\|_2 - 2R f(\ip{\t}{\v}) \right|
& \le \left| \Big\| \x \Big\|_2 - \twonorm{\x^\sharp } \right| + \left| \twonorm{\x^\sharp} -  2 R f(\ip{\t}{\v}) \right|\\
& \le \twonorm{\x-\x^\sharp} + \left|  \twonorm{\x^\sharp} -  2 R f(\ip{\t}{\v}) \right|\\
& \le 2 \delta_0 \twonorm{\x} + 8 C \, \delta_0 R 
  \le 2(1+4C) \delta_0 R.
\end{align*}
Finally, with the estimate $\hat{\x}$ for $\x$ being defined as
$$
\hat{\x} := 2 R f(\ip{\t}{\v}) \, \u,
$$
the previous considerations lead to the error estimate
\begin{align*}
\twonorm{\x-\hat{\x}}
& \le \twonorm{\x - \twonorm{\x} \u }  + \left| \twonorm{\x} - 2 R f(\ip{\t}{\v}) \right| \twonorm{\u}
\le 2 \delta_0 \twonorm{\x} + 2(1+4C) \delta_0 R  \\
& \le 4(1+2C) \delta_0 R .
\end{align*}
Our initial choice of $\delta_0 = \delta/(4(1+2C))$ enables us to conclude that $\twonorm{\x-\hat{\x}} \le \delta R$.
\end{proof}

\subsection{Second-order-cone-programming-based order-one recovery scheme}

\begin{proof}[Proof of Theorem~\ref{thm:linpro}]
Without loss of generality, we assume that $R = 1/2$.
The general argument follows from a rescaling.  
We begin by considering the exact case in which $\e = {\bf 0}$.  
Observe that, by the Cauchy--Schwarz inequality,
\[\onenorm{\x} \leq \sqrt{\zeronorm{\x}} \cdot \twonorm{\x} \leq \sqrt{s}.\]
Since $\x$ is feasible for program \eqref{eq:socp}, we also have $\onenorm{\hat{\x}} \leq \sqrt{s}$.  
The result will follow from the following two observations:
\begin{itemize}
\item $\x, \hat{\x} \in \sqrt{s} B_1^n \cap B_2^n$
\item  $\sign(\< \a_i, \x\> - \tau_i  ) = \sign(\< \a_i, \hat{\x}\> - \tau_i), \qquad i = 1, \hdots, q$.
\end{itemize}
Each equation $\< \a_i, \z \> - \tau_i = 0$ defines a hyperplane perpendicular to $\a_i$ and translated proportionally to $\tau_i$; further, $\x$ and $\hat{\x}$ are on the same side of the hyperplane.  
To visualize this, imagine  $\sqrt{s} B_1^n \cap B_2^n$ as an oddly shaped apple that we are trying to dice.  
Each hyperplane randomly slices the apple, eventually cutting it into small sections.  
The vectors $\hat{\x}$ and $\x$ belong to the same section.  
Thus, we ask: \textit{how many random slices are needed for all sections to have small diameter?}  
Similar questions have been addressed in a broad context in \cite{pv-embeddings}.  We give a self-contained proof that $O(s \log(n/s))$ slices suffice based on the following result \cite[Theorem 3.1]{pv-embeddings}.

\begin{theorem}[Random hyperplane tessellations of $\sqrt{s} B_1^n \cap S^{n-1}$]
\label{thm:embeddings}
Let $\a_1, \a_2, \hdots, \a_q \in \R^n$ be independent standard normal vectors.  
If
\[q \geq C \delta^{-4} s \log(n/s),\]
then, with probability at least $1 - 2 \exp(-c \delta^4 q)$,
 all $\x, \x' \in \sqrt{s} B_1^n \cap S^{n-1}$ with
\[\sign \< \a_i, \x\>  = \sign \< \a_i, \x'\>  , \qquad i = 1, \hdots, q,\]
satisfy
\[\twonorm{\x - \x'} \leq \frac{\delta}{8}.
\]
The positive constants $c$ and $C$ are absolute constants.
\end{theorem}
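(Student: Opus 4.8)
The plan is to prove the contrapositive: show that any two points $\x, \x' \in K := \sqrt{s} B_1^n \cap S^{n-1}$ with $\twonorm{\x - \x'} > \delta/8$ are separated by at least one of the random hyperplanes, i.e. $\sign\ip{\a_i}{\x} \neq \sign\ip{\a_i}{\x'}$ for some $i$. Writing $d_\A(\x,\x') := |\{ i : \sign\ip{\a_i}{\x} \neq \sign\ip{\a_i}{\x'} \}|$ for the number of separating hyperplanes, I would in fact establish the stronger quantitative statement that $d_\A(\x,\x')/q$ stays uniformly close to its expectation, which is bounded below for far-apart pairs.

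First I would record the single-pair geometry. For fixed unit vectors $\x, \x'$, the hyperplane with normal $\a_i$ separates them exactly when $\a_i$ lands in the wedge between them, an event of probability $\theta/\pi$ where $\theta = \arccos\ip{\x}{\x'}$ is the angle between $\x$ and $\x'$. Since $\twonorm{\x-\x'} = 2\sin(\theta/2)$, we have $\theta \ge \twonorm{\x-\x'} > \delta/8$, so the per-hyperplane separation probability is $\theta/\pi \gtrsim \delta$. Hence $\EE\, d_\A(\x,\x') = q\theta/\pi \gtrsim \delta q$, and for a single pair a Chernoff bound gives $\PR{d_\A(\x,\x') = 0} \le (1-\theta/\pi)^q \le \exp(-c\delta q)$.

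The substance of the proof is upgrading this pointwise estimate to a bound uniform over the uncountable family of far-apart pairs in $K$. The right complexity measure is the Gaussian (mean) width $w(K) = \EE \sup_{\x \in K} \ip{\vect{g}}{\x}$ with $\vect{g}$ standard normal; because $K$ consists of unit vectors that are effectively $s$-sparse, one has $w(K) \lesssim \sqrt{s\log(n/s)}$, which is precisely what makes $q \gtrsim \delta^{-4} s\log(n/s) \asymp \delta^{-4} w(K)^2$ the right sample complexity. I would then control the empirical process $\sup_{\x,\x' \in K} \left| d_\A(\x,\x')/q - \theta(\x,\x')/\pi \right|$ and show it is at most a small constant multiple of $\delta$ with probability $1 - 2\exp(-c\delta^4 q)$; combined with the lower bound $\theta/\pi \gtrsim \delta$ for far-apart pairs, this forces $d_\A(\x,\x')/q > 0$ uniformly, which is the claim.

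The hard part is that the separation indicator $\ind{\sign\ip{\a}{\x} \neq \sign\ip{\a}{\x'}}$ is a discontinuous function of $\a$, so a direct net-or-chaining argument over $K$ fails: points arbitrarily near a hyperplane flip the indicator. The standard remedy, which I would follow, is to introduce a margin $t \asymp \delta$ and count only hyperplanes that separate $\x$ and $\x'$ with slack (both $|\ip{\a}{\x}|$ and $|\ip{\a}{\x'}|$ exceeding $t$, with opposite signs). Replacing the hard indicator by such a margin-smoothed Lipschitz surrogate lets me invoke Gaussian concentration together with a width-based uniform deviation bound (via a bounded-differences or Talagrand argument) indexed by the difference set $K - K$, whose width is $2w(K)$. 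A separate elementary estimate shows the margin undercounts the true separations by at most an $O(t) \asymp O(\delta)$ fraction of the hyperplanes, so the smoothing error is absorbed into the target accuracy. The polynomial loss incurred both by the margin and by converting mean-width control into the required $\delta$-level accuracy is exactly what produces the $\delta^{-4}$ factor in $q$ and the $\delta^4$ in the exponent; carefully tracking these constants is the main bookkeeping obstacle.
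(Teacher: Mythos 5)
The paper itself offers no proof of this statement: Theorem~\ref{thm:embeddings} is imported verbatim as Theorem~3.1 of the cited tessellation paper \cite{pv-embeddings} and used as a black box to deduce Corollary~\ref{cor:tessellations}, so the only meaningful benchmark is the argument in that reference. Your outline reproduces its strategy in all essential respects: the exact wedge probability $\theta/\pi$ for a fixed pair, the reduction to showing every far-apart pair in $\sqrt{s}B_1^n\cap S^{n-1}$ is cut by at least one (in fact a positive fraction of the) hyperplanes, the correct identification of the discontinuity of the sign indicator as the obstacle to a naive net or chaining argument, and the standard remedy of counting only separations with margin $t\asymp\delta$ so that the surrogate is stable under perturbations, with the complexity of the set entering through its effective sparsity / Gaussian width $\asymp\sqrt{s\log(n/s)}$. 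The one place where your sketch is thinner than the actual proof is the quantitative core: a generic width-based uniform deviation bound for the separation fraction over an arbitrary subset of the sphere is known to cost $q\gtrsim\delta^{-6}w(K)^2$, and the sharper $\delta^{-4}$ exponent claimed here is extracted in \cite{pv-embeddings} from a net argument tailored to effectively sparse vectors (a net at scale $\mathrm{poly}(\delta)$, a Chernoff bound for margin-separations of net pairs, a union bound over the net, and a separate estimate showing that moving within a net cell disturbs only an $O(\delta)$ fraction of the margin separations). You gesture at exactly these ingredients but defer the bookkeeping, and that bookkeeping is precisely where the exponent $4$ in both $q\geq C\delta^{-4}s\log(n/s)$ and the failure probability $2\exp(-c\delta^4 q)$ is earned; as written, your plan is a faithful roadmap to the cited proof rather than a complete self-contained argument.
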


We  translate the above result into a tessellation of $\sqrt{s} B_1^n \cap B_2^n$ in the following corollary.  

\begin{cor}[Random hyperplane tessellations of $\sqrt{s} B_1^n \cap B_2^n$]
\label{cor:tessellations}
Let $\a_1, \a_2, \hdots, \a_q \in \R^n$ be independent standard normal vectors and let $\tau_1, \tau_2, \hdots, \tau_q$ be independent standard normal random variables.  
If
\[q \geq C \delta^{-4} s \log(n/s),\]
then, with probability at least $1 - 2 \exp(-c \delta^4 q)$,
 all $\x, \x' \in \sqrt{s} B_1^n \cap B_2^n$ with
\[\sign(\< \a_i, \x\> - \tau_i) = \sign(\< \a_i, \x'\> - \tau_i), \qquad i = 1, \hdots, q,\]
satisfy
\[\twonorm{\x - \x'} \leq \frac{\delta}{4}.
\]
The positive constants $c$ and $C$ are absolute constants.
\end{cor}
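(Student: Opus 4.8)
The plan is to reduce this affine, ball-constrained tessellation statement to the homogeneous, sphere-constrained Theorem~\ref{thm:embeddings} by lifting to one extra dimension. First I would absorb the dither into the measurement vector: set $\tilde{\a}_i := (\a_i, \tau_i) \in \R^{n+1}$, which is again a standard normal vector since $\a_i$ and $\tau_i$ are independent standard normals, and set $\tilde{\x} := (\x, -1) \in \R^{n+1}$. Then $\ip{\tilde{\a}_i}{\tilde{\x}} = \ip{\a_i}{\x} - \tau_i$, so the dithered sign pattern $\sign(\ip{\a_i}{\x} - \tau_i)$ becomes the homogeneous sign pattern $\sign(\ip{\tilde{\a}_i}{\tilde{\x}})$ in $\R^{n+1}$.

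Next I would place the lifted points on the sphere. Writing $\bar{\x} := \tilde{\x}/\twonorm{\tilde{\x}} \in S^n$, normalization by a positive scalar does not change signs, so the matching-sign hypothesis is preserved. Since $\twonorm{\x} \le 1$ we have $\twonorm{\tilde{\x}} = \sqrt{\twonorm{\x}^2 + 1} \in [1,\sqrt 2]$, and $\onenorm{\bar{\x}} \le (\onenorm{\x}+1)/\twonorm{\tilde{\x}} \le \sqrt{s} + 1 \le 2\sqrt{s}$, so $\bar{\x}$ lies in $\sqrt{4s}\,B_1^{n+1} \cap S^n$. Thus I can invoke Theorem~\ref{thm:embeddings} in dimension $n+1$ with sparsity parameter $4s$ and with $\delta$ replaced by a suitable constant multiple of itself; the hypothesis $q \ge C\delta^{-4}s\log(n/s)$ of the corollary covers the required $q \ge C'\delta^{-4}(4s)\log((n+1)/(4s))$ after adjusting absolute constants (using $\log((n+1)/(4s)) \le \log(n/s)$ in the relevant range). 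This yields $\twonorm{\bar{\x} - \bar{\x}'} \le \delta/8$, up to the same constant, for any two feasible points $\x,\x'$.

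Finally I would undo the normalization. Because the last coordinate of $\tilde{\x}$ is the fixed value $-1$, the last coordinate of $\bar{\x}$ satisfies $\bar{x}_{n+1} = -1/\twonorm{\tilde{\x}}$, hence $\twonorm{\tilde{\x}} = 1/|\bar{x}_{n+1}|$ and $\tilde{\x} = \bar{\x}/|\bar{x}_{n+1}|$, with $|\bar{x}_{n+1}| \in [1/\sqrt 2, 1]$ bounded away from $0$. The reconstruction map $\bar{\x} \mapsto \bar{\x}/|\bar{x}_{n+1}|$ is therefore Lipschitz with an absolute constant on this region, so $\twonorm{\tilde{\x} - \tilde{\x}'} \lesssim \twonorm{\bar{\x} - \bar{\x}'}$; and since $\tilde{\x}$ and $\tilde{\x}'$ share the last coordinate $-1$, we have $\twonorm{\x - \x'} = \twonorm{\tilde{\x} - \tilde{\x}'}$. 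Choosing the constant multiple of $\delta$ in the previous step appropriately converts the $\delta/8$ bound into the desired $\twonorm{\x-\x'} \le \delta/4$.

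I expect the de-normalization to be the only delicate point: since the scaling factor $\twonorm{\tilde{\x}}$ differs between $\x$ and $\x'$, the quantity $\twonorm{\bar{\x} - \bar{\x}'}$ is not simply proportional to $\twonorm{\x - \x'}$, and one must use the pinned last coordinate to control the discrepancy between the two normalizations (concretely, expanding $\x - \x' = r(\x/r - \x'/r') + (r-r')\x'/r'$ with $r := \twonorm{\tilde{\x}}$ and bounding $|r - r'|$ through the difference of last coordinates). Everything else is a matter of tracking absolute constants through the reduction.
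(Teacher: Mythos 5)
Your proposal is correct and follows essentially the same route as the paper: lift to $\R^{n+1}$ by absorbing the dither $\tau_i$ into an augmented Gaussian vector, observe that the lifted points lie in $\sqrt{4s}\,B_1^{n+1}\cap S^{n}$ after normalization, invoke Theorem~\ref{thm:embeddings}, and then transfer the bound back using the pinned last coordinate. The only cosmetic difference is in the de-normalization step, where the paper exploits the identity $a+b\le\sqrt{2}\sqrt{a^2+b^2}$ to get a clean factor of $2$ while you use a generic Lipschitz bound and absorb the loss into the choice of $\delta$; both are valid.
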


\begin{proof}
For any $\z \in \sqrt{s} B_1^n \cap B_2^n$,
we notice that $\sign( \< \a_i, \z \> - \tau_i) = \sign( \< [\a_i, -\tau_i], [\z, 1] \> )$,
where the augmented vectors $[\a_i, -\tau_i] \in \R^{n+1}$ and $[\z, 1] \in \R^{n+1}$ are the concatenations of $\a_i$ with $-\tau_i$ and $\z$ with $1$, respectively.
Thus, we have moved to the ditherless setup by only increasing the dimension by one.
Since
$$
\twonorm{[\z,1]} \ge 1
\quad \mbox{and} \quad
\onenorm{[\z,1]} = \onenorm{\z} + 1 \le \sqrt{s}+1 \le \sqrt{4s},
$$
we may apply Theorem \ref{thm:embeddings} after projecting on $S^n$ to derive 
\begin{equation}
\label{eq:original equation}
\twonorm{\frac{[\x, 1]}{\twonorm{[\x, 1]}} -  \frac{[\x', 1]}{\twonorm{[\x', 1]}}} \leq \frac{\delta}{8}.
\end{equation}
with probability at least $1 - 2 \exp(c \delta^4 q)$.
We now show that the inequality (\ref{eq:original equation}) implies that $\twonorm{\x - \x'} \leq \delta/4$.  

First note that
\[
\twonorm{\x - \x'} \leq \sqrt{2} \, \twonorm{\frac{\x}{\twonorm{[\x, 1]}} - \frac{\x'}{\twonorm{[\x, 1]}}}
\]
since $\twonorm{\x} \leq 1$.  
Subtract and add $\x'/\twonorm{[\x', 1]}$ inside the norm and apply triangle inequality to obtain
\[
\twonorm{\x - \x'} \leq \sqrt{2}\left( \twonorm{\frac{\x}{\twonorm{[\x, 1]}} - \frac{\x'}{\twonorm{[\x', 1]}}} + \twonorm{\x'}\cdot \left |\frac{1}{\twonorm{[\x, 1]}} - \frac{1}{\twonorm{[\x', 1]}} \right| \right).
\]
Since $\twonorm{\x'} \leq 1$, we may remove $\twonorm{\x'}$ from in front of the second term in parenthesis.  
Next, use the inequality $a + b \leq \sqrt{2} \cdot \sqrt{a^2 + b^2}$ on the two terms in parenthesis.  This bounds the right-hand side by precisely 
\[2 \twonorm{\frac{[\x, 1]}{\twonorm{[\x, 1]}} -  \frac{[\x', 1]}{\twonorm{[\x', 1]}}},\]
which is bounded by $\delta/4$ according to \eqref{eq:original equation}.
\end{proof}

This corollary immediately completes the proof of Theorem~\ref{thm:linpro} in the case $\e  = {\bf 0}$.  
We now turn to the general problem where $\inftynorm{\e} \leq c \delta^3$ and thus $\twonorm{\e} \leq c \delta^3 \sqrt{q}$.  
We reduce to the exact problem using the simultaneous $(\ell_1,\ell_2)$-quotient property \eqref{SimQP}, 
which guarantees that the error can be represented by a signal with small $\ell_1$-norm.  
In particular, \eqref{SimQP} implies that,
with probability at least $1 - \exp(-c q)$,
there exists a vector $\u$ satisfying 
\begin{equation}
\e  = \matA \u
\qquad \mbox{with} \quad
\left\{
\begin{matrix}
\twonorm{\u} & \le & \delta/4,\\
\onenorm{\u} & \le & c_1 \delta^3 \sqrt{q/\log(n/q) } 
\end{matrix}
\right.
\end{equation}
where $c_1$ is an absolute constant which we may choose as small as we need.
We may now replace $\x$ with $\tilde{\x} = \x + \u$ and proceed as in the proof in the noiseless case.  
Reconstruction of $\tilde{\x}$ to accuracy $\delta/4$ yields reconstruction of $\x$ to accuracy $\delta/2$, as desired.  
By replacing $\x$ with $\tilde{\x}$, we have (mildly) increased the bound on the $\ell_1$-norm and the $\ell_2$-norm.  Fortunately, $\twonorm{\tilde{\x}} \leq \twonorm{\x} + \twonorm{\u} \leq 1$ and thus $\tilde{\x}$ remains feasible for the program \eqref{eq:socp}.  Further, $\tilde{\x}$ is approximately sparse in the sense that $\onenorm{\tilde{\x}} \leq \onenorm{\x} + \onenorm{\u} \leq \sqrt{s} + c_1 \delta^3 \sqrt{q/\log(n/q)} =: \sqrt{\tilde{s}}$.  
To conclude the proof, we must show that the requirement of Theorem \ref{thm:linpro}, namely $q \geq C' \delta^{-4} s \log(n/s)$,
implies that the required condition of Corollary \ref{cor:tessellations}, 
namely $q \geq C \delta^{-4} \tilde{s} \log(n/\tilde{s})$,
is still satisfied.  The result follows from massaging the equations, as sketched below.

If $s \geq c_1^2 \delta^6 q/\log(n/q)$, then $\sqrt{\tilde{s}} \leq 2 \sqrt{s}$ and the desired result follows quickly.  Suppose then that $s < c_1^2 \delta^6 q/\log(n/q)$ and thus $\tilde{s} \leq c_2 \delta^6 q/\log(n/q)$.  To conclude, note that
\[C \delta^{-4} \tilde{s} \log(n/\tilde{s}) \leq q \cdot C \cdot c_2\frac{\delta^2}{\log(n/q)}\cdot (\log(n/q) + \log(1/c_2) + 6 \log(1/\delta) + \log(\log(n/q)) \leq q,\]
where the first inequality follows since $s \log(n/s)$ is increasing in $s$ and thus $\tilde{s}$ may be replaced by its upper bound, $c_2 \delta^6 q/\log(n/q)$.  The last inequality follows by taking $c_2$ small enough. This concludes the proof.
\end{proof}

\section{Numerical Results}\label{sec:exps}

This brief section provides several experimental validations of the theory developed above.
The computations,
performed in MATLAB,
are reproducible and can be downloaded from the second author's webpage. 
The random measurements $\a_i$ were always generated as vectors with independent standard normal entries.
As for the random sparse vectors $\x$,
after a random choice of their supports,
their nonzero entries also consisted of independent standard normal variables.

Our first experiment (results not displayed here) verified on a single sparse vector that both its direction and magnitude can be accurately estimated via order-one recovery schemes,
while only its direction could be accurately estimated using convex programs \cite{pv-1-bit,pv-noisy-1bit}, $\ell_1$-regularized logistic regression, or binary iterative hard thresholding \cite{Jacques2011}.
We also noted the reduction of the reconstruction error by several orders of magnitude 
from the same number $m$ of quantized measurements
when Algorithms~\ref{alg:Q}-\ref{alg:Delta} are used instead of the above methods.
We remark in passing that this number $m$ is significantly larger than the number of measurements in classical compressed sensing with real-valued measurements, as intuitively expected.

Our second experiment corroborates the exponential decay of the error rate.
The results are summarized in Figure \ref{FigExpDecay},
whose logarithmic scale on the vertical axis confirms the behavior 
$\log(\|\x-\x^*\|_2/\|\x\|_2) \le - c \lambda$
for the relative reconstruction error as a function of the oversampling factor $\lambda=m / \log(n/s)$.
The tests were conducted on four sparsity levels $s$ at a fixed dimension $n$
for an oversampling ratio $\lambda$ varying through the increase of the number $m$ of measurements.
The number $T$ of iterations in Algorithms \ref{alg:Q} and \ref{alg:Delta} was fixed throughout the experiment based on hard thresholding
and throughout the experiment based on second-order cone programming.
The values of all these parameters are reported directly in Figure \ref{FigExpDecay}.
We point out that we could carry out a more exhaustive experiment for the faster hard-thresholding-based version than for the slower second-order-cone-programming-based version,
both in terms of problem scale and of number of tests.

\begin{figure}[htbp]
\center
\subfigure{
(a) \includegraphics[width=0.42\textwidth]{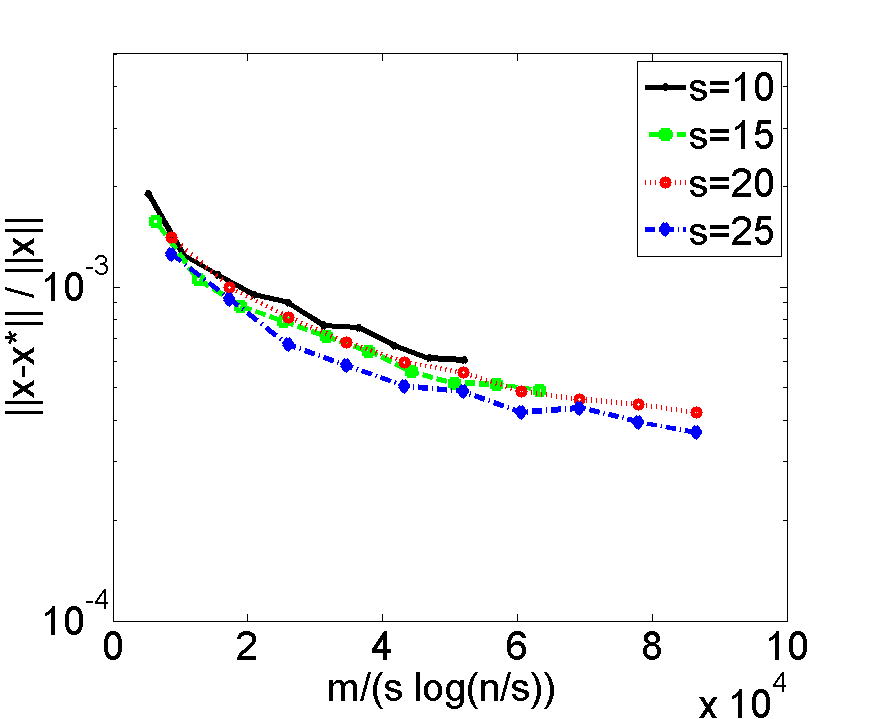}
\label{FigExpDecayHT}
}\quad
\subfigure{
(b) \includegraphics[width=0.42\textwidth]{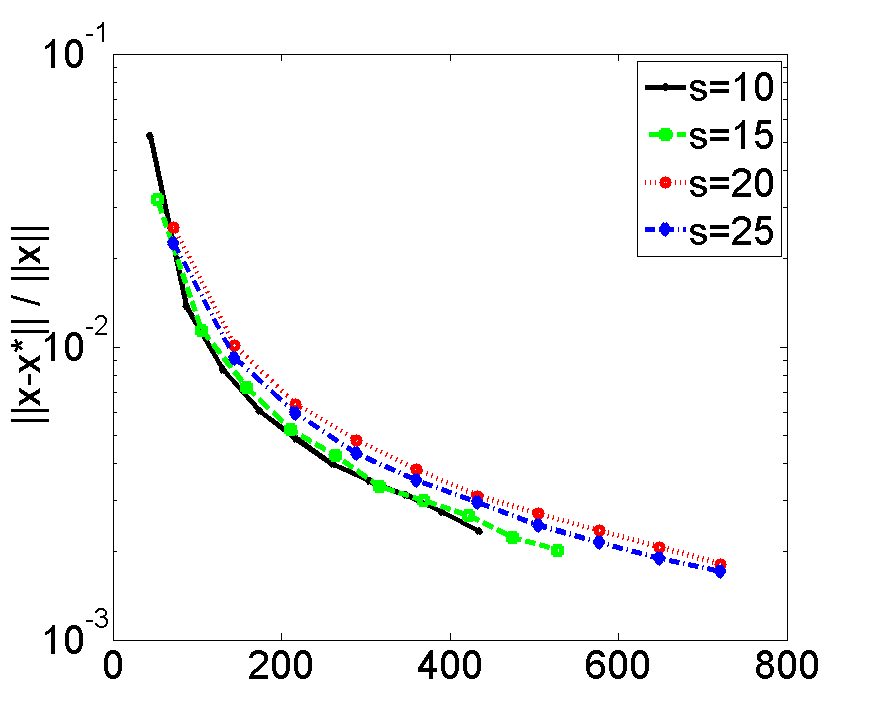}
\label{FigExpDecayLP}
}
\caption{Averaged relative error for the reconstruction of sparse vectors ($n = 100$) by the outputs
of Algorithms~\ref{alg:Q}-\ref{alg:Delta} based on (a) hard thresholding  and (b) second-order cone programming 
as a function of the oversampling ratio.
}
\label{FigExpDecay}
\end{figure}

Our third experiment examines the effect of measurement errors on the reconstruction via Algorithms \ref{alg:Q} and \ref{alg:Delta}.
Once again,
the problem scale was much larger when relying on hard thresholding than on second-order cone programming.
The values of the size parameters are reported on Figure \ref{FigNoise}.
This figure shows how the reconstruction error decreases as the iteration count $t$ increases in Algorithms \ref{alg:Q} and \ref{alg:Delta}.
For the hard-thresholding-based version,
see Figure \ref{FigNoise}(a),
we observe an error decreasing by a constant factor at each iteration when the measurements are totally accurate.
Introducing a pre-quantization noise $\e \sim N(0,\sigma^2 {\bf I})$
in $\y = \sign (\matA \x + \e)$ does not affect this behavior too much
until the ``noise floor'' is reached.
Flipping a small fraction of the bits $\sgn \ip{\a_i}{\x}$
by multiplying them with $f_i = \pm 1$,  most of which being equal to $+1$,
seems to have an even smaller effect on the reconstruction.
However, these bit flips prevent the use of the second-order-cone-programming-based version,
as the constraints of the optimization problems become infeasible.
But we still remark that the pre-quantization noise is not very damaging in this case either,
see Figure \ref{FigNoise}(b),
where the results of an experiment using $\ell_1$-regularized logistic regression in Algorithms \ref{alg:Q} and \ref{alg:Delta} are also displayed.

\begin{figure}[htbp]
\center
\subfigure{
(a) \includegraphics[width=0.42\textwidth]{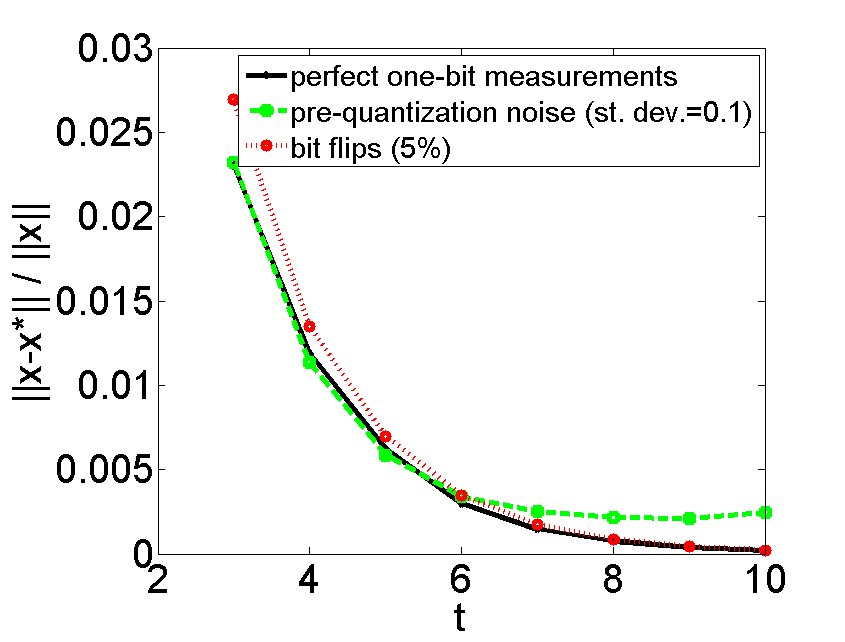}
\label{FigNoiseHT}
}\quad
\subfigure{
(b) \includegraphics[width=0.42\textwidth]{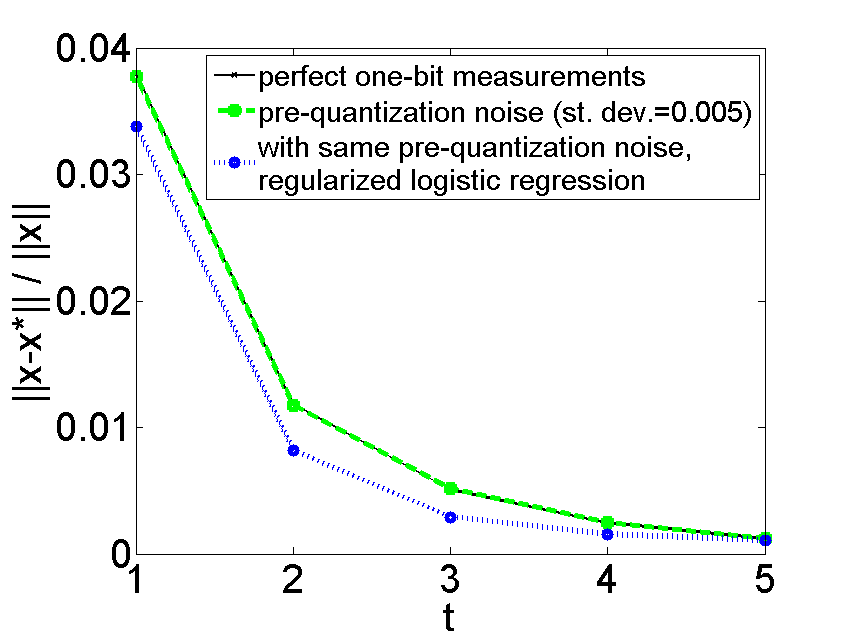}
\label{FigNoiseLP}
}
\caption{Averaged relative error for the reconstruction of sparse vectors ($n=100$) by the outputs
of Algorithms~\ref{alg:Q}-\ref{alg:Delta} based on (a) hard thresholding ($s=15$, $m=10^5$)  and second-order cone programming and (b) $\ell_1$-regularized logistic regression ($s=10$, $m=2\cdot 10^4$)
as a function of the iteration count when measurement error is present.
}
\label{FigNoise}
\end{figure}

\section{Discussion}\label{sec:discussion}

\subsection{Related work}\label{sec:related}

The one-bit compressed sensing framework developed by Boufounos and Baraniuk~\cite{Boufounos2008} is a relatively new line of work, with theoretical backing only recently being developed.  Empirical evidence and convergence analysis of algorithms for quantized measurements appear in the works of Boufounos et al. and others~\cite{Boufounos2009,Boufounos2008,Laska2010,Zymnis2010}.  Theoretical bounds on recovery error have only recently been studied, outside from results which model the one-bit setting as classical compressed sensing with specialized additive measurement error~\cite{Dai2009,jacques2011dequantizing,Sun2009}.  Other settings analyze quantized measurements where the number of bits used depends on signal parameters like sparsity level or the dynamic range~\cite{ardestanizadeh2009,gunturk2010,gunturk2010paper}.  Boufounos develops hierarchical and scalar quantization with modified quantization regions which aim to balance the rate-distortion trade-off~\cite{boufounos2011hierarchical,boufounos2012universal}.  These results motivate our work but do not directly apply to the compressed sensing setting.

Theoretical guarantees more in line with the objectives of this paper began with Jacques et al.~\cite{Jacques2011} who proved robust recovery from approximately $s\log n$ one-bit measurements.  However, the program used has constraints which require sparsity estimation, making it NP-Hard in general.  Gupta et al. offers a computationally feasible method via a scheme which either depends on the dynamic range of the signal or is adaptive~\cite{Gupta2010}.  Plan and Vershynin analyze a tractable non-adaptive convex program which provides accurate recovery without these types of dependencies~\cite{pv-1-bit,pv-noisy-1bit,alpv-1bit}.  Other methods have also been proposed, many of which are largely motivated by classical compressed sensing methods (see e.g.~\cite{Boufounos2009,movahed2012robust,yan2012robust,ma2013two,jacques2013quantized}).  

In order to break the bound \eqref{eq:bestknown} and obtain an exponential rather than polynomial dependence on the oversampling factor, one cannot take traditional non-adaptive measurements.  Several schemes have employed adaptive samples including the work of Kamilov et. al. which utilizes a generalized approximate message passing algorithm (GAMP) for recovery, and the adaptive thresholds are selected in line with this recovery method.  Adaptivity is also considered in~\cite{Gupta2010} which allows for a constant factor improvement in the number of measurements required.  However, to our best knowledge our work is the first to break the bound given by~\eqref{eq:bestknown}.

Regarding the link between our methods and sparse binary regression, there is a number of related theoretical results focusing on sparse logistic regression \cite{negahban2010unified, bunea2008honest,van2008high, bach2010self, ravikumar2010high, meier2008group, kakade2009learning}, but these are necessarily constrained by the same limited accuracy of the one-bit compressed sensing model discussed in Section \ref{sec:intro}.  

We also point to the closely related threshold group testing literature, see e.g., \cite{cheraghchi2013improved}.  In many cases, the statistician has some control over the threshold beyond which the measurement maps to a one.  For example, the wording of a binary survey may be adjusted to only ask for a positive answer in an extreme case; a study of the relationship of heart attacks to various factors may test whether certain subjects have heart attacks in a short window of time and other subjects have heart attacks in a long window of time.  The main message of this paper is that by carefully choosing this threshold the accuracy of reconstruction of the parameter vector $\x$ can be greatly increased.

\subsection{Conclusions}

We have proposed a recursive framework for adaptive thresholding quantization in the setting of compressed sensing.  
We have developed both a second-order-cone-programming-based method and a hard-thresholding-based method for signal recovery from these type of quantized measurements.  
Both of our methods feature a bound on the recovery error of the form $e^{-\Omega(\lambda)}$, an exponential dependence on the oversampling factor $\lambda$.  To our best knowledge, this is the first result of this kind, and it improves upon the best possible dependence of $\Omega(1/\lambda)$ for non-adaptively quantized measurements.  

\subsection*{Acknowledgements}
We would like to thank the AIM SQuaRE program for hosting our initial collaboration and also Mr.\ Lan for discussions around the relationship of our work to logistic regression.

\bibliographystyle{myalpha}
\bibliography{one-bit-bib}

\newcommand{\etalchar}[1]{$^{#1}$}
\begin{thebibliography}{DPvdBW14}

\bibitem[ACS09]{ardestanizadeh2009}
E.~Ardestanizadeh, M.~Cheraghchi, and A.~Shokrollahi.
\newblock Bit precision analysis for compressed sensing.
\newblock In {\em Proceedings of the IEEE International Symposium on
  Information Theory (ISIT)}. IEEE, 2009.

\bibitem[ALPV14]{alpv-1bit}
A.~Ai, A.~Lapanowski, Y.~Plan, and R.~Vershynin.
\newblock One-bit compressed sensing with non-{G}aussian measurements.
\newblock {\em Linear Algebra and its Applications}, 441:222--239, 2014.

\bibitem[Bac10]{bach2010self}
F.~Bach.
\newblock Self-concordant analysis for logistic regression.
\newblock {\em Electronic Journal of Statistics}, 4:384--414, 2010.

\bibitem[BB08]{Boufounos2008}
P.~T. Boufounos and R.~G. Baraniuk.
\newblock 1-bit compressive sensing.
\newblock In {\em Proceedings of the 42nd Annual Conference on Information
  Sciences and Systems (CISS)}, pages 16--21. IEEE, 2008.

\bibitem[Bou09]{Boufounos2009}
P.~T. Boufounos.
\newblock {Greedy sparse signal reconstruction from sign measurements}.
\newblock In {\em Asilomar Conference on Signals, Systems and Computers},
  November 2009.

\bibitem[Bou11]{boufounos2011hierarchical}
P.~T. Boufounos.
\newblock Hierarchical distributed scalar quantization.
\newblock In {\em Proceedings of the 9th International Conference on Sampling
  Theory and Applications (SampTA)}, 2011.

\bibitem[Bou12]{boufounos2012universal}
P.~T. Boufounos.
\newblock Universal rate-efficient scalar quantization.
\newblock {\em IEEE Transactions on Information Theory}, 58(3):1861--1872,
  2012.

\bibitem[Bun08]{bunea2008honest}
F.~Bunea.
\newblock Honest variable selection in linear and logistic regression models
  via $\ell_1$ and $\ell_1 + \ell_2$ penalization.
\newblock {\em Electronic Journal of Statistics}, 2:1153--1194, 2008.

\bibitem[CD13]{candes2013well}
E.~J. Cand{\`e}s and M.~A. Davenport.
\newblock How well can we estimate a sparse vector?
\newblock {\em Applied and Computational Harmonic Analysis}, 34(2):317--323,
  2013.

\bibitem[Che13]{cheraghchi2013improved}
M.~Cheraghchi.
\newblock Improved constructions for non-adaptive threshold group testing.
\newblock {\em Algorithmica}, 67(3):384--417, 2013.

\bibitem[DPM09]{Dai2009}
W.~Dai, H.~V. Pham, and O.~Milenkovic.
\newblock {A comparative study of quantized compressive sensing schemes}.
\newblock In {\em Proceedings of the IEEE International Symposium on
  Information Theory (ISIT)}. IEEE, 2009.

\bibitem[DPvdBW14]{1bit-MC}
M.~A. Davenport, Y.~Plan, E.~van~den Berg, and M.~Wootters.
\newblock 1-bit matrix completion.
\newblock {\em Information and Inference}, 2014.

\bibitem[DSP]{DSPweb}
Compressive sensing webpage.
\newblock http://dsp.rice.edu/cs.

\bibitem[EK12]{eldar2012compressed}
Y.~C. Eldar and G.~Kutyniok.
\newblock {\em Compressed sensing: theory and applications}.
\newblock Cambridge University Press, 2012.

\bibitem[FR13]{foucart2013}
S.~Foucart and H.~Rauhut.
\newblock {\em A mathematical introduction to compressive sensing}.
\newblock Birkh{\"a}user, 2013.

\bibitem[GLP{\etalchar{+}}10]{gunturk2010}
C.~S. G{\"u}nt{\"u}rk, M.~Lammers, A.~M. Powell, R.~Saab, and {\"O}.~Y{\i}lmaz.
\newblock {S}igma-{D}elta quantization for compressed sensing.
\newblock In {\em Proceedings of the 44th Annual Conference on Information
  Sciences and Systems (CISS)}. IEEE, 2010.

\bibitem[GLP{\etalchar{+}}13]{gunturk2010paper}
C.~S. G{\"u}nt{\"u}rk, M.~Lammers, A.~M. Powell, R.~Saab, and {\"O}.~Y{\i}lmaz.
\newblock Sobolev duals for random frames and {S}igma-{D}elta quantization of
  compressed sensing measurements.
\newblock {\em Foundations of Computational Mathematics}, 13(1):1--36, 2013.

\bibitem[GNJN13]{gnjn2013}
S.~Gopi, P.~Netrapalli, P.~Jain, and A.~Nori.
\newblock One-bit compressed sensing: Provable support and vector recovery.
\newblock In {\em Proceedings of the 30th International Conference on Machine
  Learning (ICML)}, pages 154--162, 2013.

\bibitem[GNR10]{Gupta2010}
A.~Gupta, R.~Nowak, and B.~Recht.
\newblock Sample complexity for 1-bit compressed sensing and sparse
  classification.
\newblock In {\em Proceedings of the International Symposium on Information
  Theory (ISIT)}. IEEE, 2010.

\bibitem[GVT98]{goyal1998quantized}
V.~K. Goyal, M.~Vetterli, and N.~T. Thao.
\newblock Quantized overcomplete expansions in $\mathbb{R}^{N}$: analysis,
  synthesis, and algorithms.
\newblock {\em IEEE Transactions on Information Theory}, 44(1):16--31, 1998.

\bibitem[JDDV13]{jacques2013quantized}
L.~Jacques, K.~Degraux, and C.~De~Vleeschouwer.
\newblock Quantized iterative hard thresholding: Bridging 1-bit and
  high-resolution quantized compressed sensing.
\newblock In {\em Proceedings of the 10th International Conference on Sampling
  Theory and Applications (SampTA)}, pages 105--108, 2013.

\bibitem[JHF11]{jacques2011dequantizing}
L.~Jacques, D.~Hammond, and J.~Fadili.
\newblock Dequantizing compressed sensing: When oversampling and non-gaussian
  constraints combine.
\newblock {\em IEEE Transactions on Information Theory}, 57(1):559--571, 2011.

\bibitem[JLBB13]{Jacques2011}
L.~Jacques, J.~N. Laska, P.~T. Boufounos, and R.~G. Baraniuk.
\newblock Robust 1-bit compressive sensing via binary stable embeddings of
  sparse vectors.
\newblock {\em IEEE Transactions on Information Theory}, 59(4):2082--2102,
  April 2013.

\bibitem[KSST10]{kakade2009learning}
S.~Kakade, O.~Shamir, K.~Sridharan, and A.~Tewari.
\newblock Learning exponential families in high-dimensions: Strong convexity
  and sparsity.
\newblock In {\em Proceedings of the 13th International Conference on
  Artificial Intelligence and Statistics (AISTATS)}. JMLR, 2010.

\bibitem[KSW14]{knudson2014one}
K.~Knudson, R.~Saab, and R.~Ward.
\newblock One-bit compressive sensing with norm estimation.
\newblock {\em arXiv preprint arXiv:1404.6853}, 2014.

\bibitem[KSY14]{krahmer2013sigma}
F.~Krahmer, R.~Saab, and {\"O}.~Y{\i}lmaz.
\newblock {S}igma-{D}elta quantization of sub-{G}aussian frame expansions and
  its application to compressed sensing.
\newblock {\em Information and Inference}, 2014.

\bibitem[LWYB11]{Laska2010}
J.~N. Laska, Z.~Wen, W.~Yin, and R.~G. Baraniuk.
\newblock Trust, but verify: Fast and accurate signal recovery from 1-bit
  compressive measurements.
\newblock {\em IEEE Transactions on Signal Processing}, 59(11):5289--5301,
  2011.

\bibitem[MBN13]{ma2013two}
Y.~Ma, D.~Baron, and D.~Needell.
\newblock Two-part reconstruction in compressed sensing.
\newblock In {\em Proceedings of the IEEE Global Conference on Signal and
  Information Processing (GlobalSIP)}, pages 1041--1044, 2013.

\bibitem[MPD12]{movahed2012robust}
A.~Movahed, A.~Panahi, and G.~Durisi.
\newblock A robust rfpi-based 1-bit compressive sensing reconstruction
  algorithm.
\newblock In {\em Proceedings of the IEEE Information Theory Workshop (ITW)},
  pages 567--571. IEEE, 2012.

\bibitem[MVDGB08]{meier2008group}
L.~Meier, S.~Van De~Geer, and P.~B{\"u}hlmann.
\newblock The group lasso for logistic regression.
\newblock {\em Journal of the Royal Statistical Society: Series B (Statistical
  Methodology)}, 70(1):53--71, 2008.

\bibitem[NRWY12]{negahban2010unified}
S.~N. Negahban, P.~Ravikumar, M.~J. Wainwright, and B.~Yu.
\newblock A unified framework for high-dimensional analysis of {M}-estimators
  with decomposable regularizers.
\newblock {\em Statistical Science}, 27(4):538--557, 2012.

\bibitem[PV13a]{pv-1-bit}
Y.~Plan and R.~Vershynin.
\newblock One-bit compressed sensing by linear programming.
\newblock {\em Communications on Pure and Applied Mathematics},
  66(8):1275--1297, 2013.

\bibitem[PV13b]{pv-noisy-1bit}
Y.~Plan and R.~Vershynin.
\newblock Robust 1-bit compressed sensing and sparse logistic regression: A
  convex programming approach.
\newblock {\em IEEE Transactions on Information Theory}, 59(1):482--494, 2013.

\bibitem[PV14]{pv-embeddings}
Y.~Plan and R.~Vershynin.
\newblock Dimension reduction by random hyperplane tessellations.
\newblock {\em Discrete \& Computational Geometry}, 51(2):438--461, 2014.

\bibitem[RWL10]{ravikumar2010high}
P.~Ravikumar, M.~J. Wainwright, and J.~D. Lafferty.
\newblock High-dimensional {I}sing model selection using $\ell$1-regularized
  logistic regression.
\newblock {\em The Annals of Statistics}, 38(3):1287--1319, 2010.

\bibitem[SG09]{Sun2009}
J.~Sun and V.~Goyal.
\newblock Optimal quantization of random measurements in compressed sensing.
\newblock In {\em Proceedings of the IEEE International Symposium on
  Information Theory (ISIT)}. IEEE, 2009.

\bibitem[VDG08]{van2008high}
S.~Van De~Geer.
\newblock High-dimensional generalized linear models and the lasso.
\newblock {\em The Annals of Statistics}, 36(2):614--645, 2008.

\bibitem[Woj09]{Wojtaszczyk2009}
P.~Wojtaszczyk.
\newblock {Stability and instance optimality for Gaussian measurements in
  compressed sensing}.
\newblock {\em Foundations of Computational Mathematics}, 10(1):1--13, April
  2009.

\bibitem[YYO12]{yan2012robust}
M.~Yan, Y.~Yang, and S.~Osher.
\newblock Robust 1-bit compressive sensing using adaptive outlier pursuit.
\newblock {\em IEEE Transactions on Signal Processing}, 60(7):3868--3875, 2012.

\bibitem[ZBC10]{Zymnis2010}
A.~Zymnis, S.~Boyd, and E.~Cand\`es.
\newblock {Compressed sensing with quantized measurements}.
\newblock {\em IEEE Signal Processing Letters}, 17(2):149--152, February 2010.

\end{thebibliography}
\end{document}